\tikzstyle{emptyKnot}=[scale=0.8]
\tikzstyle{knot}=[circle, fill,scale=0.4]
\tikzstyle{smallKnot}=[circle, fill,scale=0.3]
\newtheorem{lemma}{Lemma}
\newtheorem{theorem}{Theorem}
\newtheorem{example}{Example}
\theoremstyle{definition}
\newcommand{\G}{\mathcal{G}}
\newcommand{\N}{\mathbb{N}}
\newcommand{\T}{\mathcal{T}}
\newcommand{\cT}{\mathcal{C}}
\newcommand{\type}{\mathsf{type}}
\newcommand{\val}{\mathsf{val}}
\newcommand{\Prob}{\mathsf{Prob}}
\newcommand{\parent}{\operatorname{parent}}
\newcommand{\KL}{\,|\!|\,}
\newcommand{\fname}[1]{#1}
\newcommand{\ffcns}{\text{fcns}}
\begin{document}

\title{Entropy Bounds for Grammar-Based Tree Compressors}

\author[D. Hucke]{Danny Hucke}
\author[M. Lohrey]{Markus Lohrey}
\author[L. Seelbach Benkner]{Louisa Seelbach Benkner}
\address{Universit\"at Siegen, Germany}
\email{\{hucke,lohrey,seelbach\}@eti.uni-siegen.de}
\thanks{This work has been supported by the DFG research project
LO 748/10-1 (QUANT-KOMP)}

\begin{abstract}
The definition of $k^{th}$-order empirical entropy of strings is extended to node-labeled binary trees.
A suitable binary encoding of tree straight-line programs (that have been used for grammar-based tree compression before)
is shown to yield binary tree encodings of size bounded by the $k^{th}$-order empirical entropy plus some lower order 
terms. This generalizes recent results for grammar-based string compression to grammar-based tree compression.

\smallskip
\noindent \textbf{Keywords.} Grammar-based compression, binary trees, empirical entropy, lossless compression
\end{abstract}

\maketitle

\section{Introduction}

\paragraph{\bf Grammar-based string compression.}
The idea of grammar-based compression is based on the fact that in many cases a word $w$ can be succinctly
represented by a context-free grammar that produces exactly $w$. Such a grammar is called a {\em straight-line
program} (SLP) for $w$. In the best case, one gets an SLP of size $\Theta(\log n)$ for a word of length $n$,
where the size of an SLP is the total length of all right-hand sides of the rules of the grammar. A grammar-based
compressor is an algorithm that produces for a given word $w$ an SLP $\mathcal{G}_w$ for $w$, where, of course, 
$\mathcal{G}_w$
should be smaller than $w$.  Grammar-based compressors can be found at many places in the literature. Probably the best known example
is the classical {\sc LZ78}-compressor of Lempel and Ziv \cite{ZiLe78}. Indeed, it is straightforward to transform
the {\sc LZ78}-representation of a word $w$ into an SLP for $w$. Other well-known grammar-based compressors
are {\sc Bisection} \cite{KiefferYNC00}, {\sc Sequitur} \cite{Nevill-ManningW97}, and {\sc Repair} \cite{LaMo99}, just to mention a few.

Recently, several upper bounds on the compression perfomance of grammar-based compressors in terms of 
higher order empirical entropy have been shown. For this, the choice of a concrete binary encoding $B(\mathcal{G})$
of an SLP $\mathcal{G}$ is crucial.
Kieffer and Yang \cite{KiYa00} came up with such a binary encoding $B$ and proved that under certain assumptions
on the grammar-based compressor $w \mapsto \mathcal{G}_w$, the combined compressor 
$w \mapsto B(\mathcal{G}_w)$ yields a universal code with respect to the family of finite-state 
information sources over finite alphabets. More precisely, it is needed that the size of the SLP $\mathcal{G}_w$
is bounded by $\mathcal{O}(|w| / \log_{\hat\sigma}|w|)$ where $\sigma$ is the size of the underlying alphabet
and $\hat\sigma =\max\{2,\sigma\}$. This upper
bound is met by all grammar-based compressors that produce so-called irreducible SLPs \cite{KiYa00}, which
is the case for e.g. {\sc LZ78}, {\sc Bisection}, and {\sc Repair} after a small modification of the latter.
In their recent paper \cite{NaOch18}, Navarro and Ochoa used the binary
encoding $B(\mathcal{G}_w)$ from \cite{KiYa00} in order to prove for every word $w$ over an alphabet of size $\sigma$
the upper bound $|B(\mathcal{G}_w)| \le |w| H_k(w) + o(|w| \log\hat\sigma)$ for every $k \in o(\log_{\hat\sigma} |w|)$. Here,
$H_k(w)$ is the $k^{th}$-order empirical entropy of $w$, and the grammar-based compressor $w \mapsto \mathcal{G}_w$
must satisfy the upper bound $|\mathcal{G}_w| \le \mathcal{O}(|w| / \log_{\hat\sigma}|w|)$. Similar but weaker upper bounds for more practical
binary SLP-encodings have been shown in \cite{Gan18,NavarroR08}.

\medskip

\paragraph{\bf Grammar-based tree compression.}
Grammar-based compression has been generalized from strings to trees
by means of linear context-free tree grammars generating exactly one tree~\cite{BuLoMa07}.
Such grammars are also known as tree straight-line programs, TSLPs for short, see~\cite{Lohrey15dlt} for a survey.
TSLPs can be seen as a proper generalization of SLPs and DAGs (directed acyclic graphs, which are a widely
used compact representation of trees). Whereas DAGs only have the ability to share repeated subtrees of a tree,
TSLPs can also share repeated tree patterns with a hole (so-called contexts).
In \cite{GanardiHJLN17}, the authors presented a linear time algorithm that computes
for a given binary tree $t$ of size $n$
a TSLP $\G_t$ of size $\mathcal{O}(n / \log_{\hat\sigma} n)$ where $\sigma$ is the size of the underlying set of node labels
and $\hat\sigma =\max\{2,\sigma\}$.
An alternative algorithm with the same asymptotic size bound can be found in \cite{GanardiL17}.  TSLPs have been also extended
to so-called forest straight-line programs (FSLPs) which allow to compress unranked node-labeled trees \cite{GasconLMRS18}.
FSLPs are very similar to top DAGs \cite{BilleGLW15} and also meet 
the size bound $\mathcal{O}(n / \log_{\hat\sigma} n)$ for unranked trees
of size $n$.
The reader should notice that the $\mathcal{O}(n / \log_{\hat\sigma} n)$-bound
cannot be achieved by DAGs: the smallest DAG for an unlabeled binary tree of size $n$ may still contain 
$n$ edges. 

\medskip

\paragraph{\bf Entropy bounds for grammar-based tree compressors.}
In this paper we first consider node-labeled binary trees: every node has a label from a finite set
$\Sigma$ of size $\sigma$ and every non-leaf node has a left and a right child. 
For unlabeled binary trees the results of Kieffer and Yang 
on universal grammar-based compressors have been extended to trees in \cite{HuckeL17,ZhangYK14}. Whereas the universal tree encoder 
from \cite{ZhangYK14}  is based on DAGs (and needs a 
certain assumption on the average DAG size with respect to the input
distribution), the encoder from \cite{HuckeL17} uses TSLPs of size $\mathcal{O}(n / \log n)$. 
For this, a binary encoding of TSLPs similar to the one for SLPs from \cite{KiYa00} is proposed. 
In this paper we extend the binary TSLP-encoding from \cite{HuckeL17} to node-labeled 
binary trees and prove an entropy bound
similar to the one from \cite{NaOch18} for strings. To do this, we first have to come up with a reasonable
higher order entropy for binary node-labeled trees (we just speak of binary trees in the following). 
Several notions of tree entropy can be found in the literature,
but all are tailored towards unranked trees and do not yield nontrivial results for the special case of unlabeled binary trees.
\begin{itemize}
\item The $k^{th}$-order label entropy from \cite{FerraginaLMM05} is based on 
the empirical probability that a node $v$ is labeled with 
a certain symbol conditioned on the $k$ first labels from the parent node of $v$ to the root of the tree.
\item 
The tree entropy from \cite{JanssonSS12} is the $0^{th}$-order entropy of the node degrees.
\item Recently, two combinations of the two previous entropy measures were proposed in \cite{Ganczorz20}.
The first combination is based on the empirical probability that a node $v$ is labeled with 
a certain symbol conditioned on (i) the $k$ first labels from the parent node of $v$ to the root  and (ii) the node degree
of $v$.
The second combination uses the empirical probability that a node $v$ has a certain degree  
conditioned on (i) the $k$ first labels from the parent node of $v$ to the root and (ii) the node label of $v$.
\end{itemize}
Tree entropy \cite{JanssonSS12} is not useful in the context of binary trees, since
a binary tree with $n$ leaves has $n-1$ nodes of degree $2$, which shows that the
tree entropy divided by the number of nodes ($2n-1$) converges to $1$ when $n$ increases. 
On the other hand, the $k^{th}$-order label entropy \cite{FerraginaLMM05} is not useful for unlabeled trees.
For the special case of unlabeled binary trees, also the combinations of \cite{Ganczorz20} do not lead to
useful entropy measures.

Our first contribution is the definition of a reasonable entropy measure for binary trees that can be also used
for the unlabeled case.
For this we define the $k$-history of a node $v$ in a binary tree $t$ by taking the last $k$ edges on the unique 
path from the root to $v$. For each edge $(v_1, v_2)$ traversed on this path we write down the node label of $v_1$
and a $0$ (resp., $1$) if $v_2$ is a left (resp., right) child of $v_1$. Thus, the $k$-history of a node is a word of length
$2k$ that alternatingly consists of symbols from $\Sigma$ and directions that are encoded by $0$ or $1$.
For nodes at depth smaller than $k$ we pad the history with $0$'s and a default node label $\Box \in \Sigma$ in order to get length exactly $k$.\footnote{This is an ad hoc decision to make the definitions easier. In the appendix we discuss different approaches of how to deal with nodes of depth smaller than $k$, and prove
that they asymptotically lead to the same entropy measure.} For each $k$-history $h$ we then consider the joint probability distribution 
$P^t_h$ of the node degree (either $0$ or $2$) and the node label, conditioned on the history $h$.
Thus, $P^t_h(a,i)$ is the probability that a randomly chosen node among the nodes with history $h$ is labeled
with the symbol $a$ and has $i \in \{0,2\}$ children.
The $k^{th}$-order empirical entropy of $t$,
$H_k(t)$ for short, is then 
the sum of the entropies of these distributions $P_h^t$ (the sum is taken over all histories $h$)
weighted with the number of nodes with history $h$. This definition is similar to the definition of the $k^{th}$
order empirical entropy of a string.

Our main result states that 
\begin{equation} \label{entropy-bound}
|B(\mathcal{G}_t)| \le H_k(t) + \mathcal{O}(k n \log\hat\sigma / \log_{\hat\sigma} n) + \mathcal{O}(n \log \log_{\hat\sigma} n / \log_{\hat\sigma} n) + \sigma,
\end{equation}
where $t$ is a binary tree with $n$ leaves, the grammar-based compressor $t \mapsto \mathcal{G}_t$
produces TSLPs of size $\mathcal{O}(n/\log_{\hat\sigma} n)$ for binary trees of size $n$ with $\sigma$ many node labels and $\hat\sigma=\max(2,\sigma)$. Moreover, $B$ is an extension of the binary TSLP-encoding described in~\cite{HuckeL17} from unlabeled binary trees to labeled binary trees (Section \ref{sec-binary-coding}). If $k \leq o(\log_{\hat\sigma} n)$ then this bound can be 
simplified to $|B(\mathcal{G}_t)| \le H_k(t) + o(n \log \hat\sigma)$. The assumption $k \leq o(\log_{\hat\sigma} n)$ can be also found in \cite{NaOch18}. 
In fact, Gagie argued in \cite{Gagie06a} that the $k^{th}$-order empirical entropy for strings
stops being a reasonable complexity measure for almost all strings of length $n$ over alphabets of size $\sigma$  when 
$k \ge \log_{\hat\sigma} n$.

Our definition of $k^{th}$-order empirical entropy does not capture all regularities that can be exploited in grammar-based
compression: Take for instance a complete unlabeled binary tree $t_n$ of height $n$ (all paths from the root to a leaf have length $n$).
This tree has $2^n$ leaves and is very well compressible: its minimal DAG has only $n+1$ nodes, hence there also exists
a TSLP of size $n+1$ for $t_n$. But for every fixed $k$ the $k^{th}$-order empirical entropy of $t_n$ divided by $n$ converges to $2$ (the trivial
upper bound) for $n \to \infty$. If $n \gg k$ then for every $k$-history $z$ the number of leaves with $k$-history 
$z$ is roughly the same as the number of internal nodes with $k$-history $z$. Hence, although $t_n$ is highly compressible with TSLPs
(and even DAGs), its $k^{th}$-order empirical entropy is close to the maximal value. However, this phenomenon occurs for grammar-based string compression and 
the well-established higher-order empirical entropy of strings as well; see Section~\ref{sec-string-SLP}.

In Section~\ref{extension-unranked} we present a simple extension of our entropy notion to node-labeled 
unranked trees. In an unranked tree the number of children of a node is arbitrary. Unranked trees are important 
in the area of XML, where the hierarchical structure of a document is represented by a node-labeled unranked
tree. For such a tree $t$ we define the $k^{th}$-order empirical entropy as the $k^{th}$-order empirical entropy of 
the {\em first-child next-sibling} (fcns for short) encoding of $t$. The fcns-encoding of $t$ is a binary tree which 
contains all nodes of $t$. If a node $v$ of $t$ has the first (i.e., left-most) child $v_1$ and the right sibling $v_2$ then
$v_1$ (resp., $v_2$) is the left (resp., right) child of $v$ in the fcns-encoding of $t$. If $v$ has no child or no right sibling
then one adds dummy leaves to the fcns-encoding in order to obtain a full binary tree. 
Our choice of defining the $k^{th}$-order empirical entropy of an unranked tree via the fcns-encoding is motivated
by the fact that in XML document trees the label of a node $v$ usually depends on the labels of the ancestors and the labels
of the left siblings of $v$. This information is contained in the history of $v$ in the fcns-encoding.

We present experimental results with real XML document trees showing that in these cases the  
$k^{th}$-order empirical entropy is indeed very small compared to the worst-case bit size. 
An unranked tree with $n$ nodes and $\sigma$ node labels can be encoded with $2n +  \log_2(\sigma) n$
bits \cite{GearyRR06}. Up to low order terms, this is optimal.
Table~\ref{table} shows the values of 
the $k^{th}$-order empirical entropy (for $k=1,2,4,8$) divided by $2n +  \log_2(\sigma) n$ for several
real XML trees (that were also used in other experiments for XML compression \cite{LohreyMM13,LMR17}).
For $k = 4$, these quotients never exceed 20\% 
and for $k=8$ all quotients are bounded by 13.5\%. 

Our experimental results combined with our entropy bound  \eqref{entropy-bound} for grammar-based
compression are in accordance with the fact that grammar-based tree compressors yield excellent compression ratios
for XML document trees, see e.g. \cite{LohreyMM13}. Some of the XML documents from our experiments were 
also used in \cite{LohreyMM13}, where the performance of TreeRePair (currently the best grammar-based tree compressor from
a practical point of view) on XML document trees was tested. An interesting observation is that those XML trees, for which 
our $k$-th order empirical entropy is large are indeed those XML trees with the worst
compression ratio for TreeRePair in \cite{LohreyMM13} (this is in particular the Treebank document from Table~\ref{table}).

In a forthcoming paper we will compare  our definition of the $k^{th}$-order empirical entropy of trees with the above mentioned
tree entropies from \cite{FerraginaLMM05,Ganczorz20,JanssonSS12} for binary as well as unranked trees and both from a theoretical as well as experimental perspective.
A short version of this paper can be found in \cite{HuckeLS19}.

\section{Preliminaries}\label{sec-prelim}
In this section, we introduce some basic definitions concerning
information theory (Section~\ref{sec-empirical}) and binary trees 
(Section~\ref{sec-trees}). 
 
With $\N$ we denote the natural numbers including $0$.
We use the standard $\mathcal{O}$-notation. If $b>0$ is a constant, then
we just write $\mathcal{O}(\log n)$ for $\mathcal{O}(\log_b n)$. 
We make the convention that $0 \cdot \log(0) = 0$ and $0 \cdot \log(x/0)=0$ for $x \geq 0$.
For the unit interval $\{ r \in \mathbb{R} \mid 0 \leq r \leq 1 \}$
we write $[0,1]$. 

Let $w = a_1 a_2 \cdots a_l \in \Gamma^*$ be a word over an alphabet $\Gamma$. With $|w|=l$ we denote the 
length of $w$. The empty word is denoted
by $\varepsilon$. For $a \in \Gamma$ we denote with $|w|_a = |\{ i \mid 1 \leq i \leq l, a_i = a\}|$
the number of occurrences of $a$ in $w$.

\subsection{Empirical distributions and empirical entropy} \label{sec-empirical}

Let $A$ be a finite set. A probability distribution on $A$ is a mapping $p : A \to [0,1]$
such that $\sum_{a \in A} p(a) = 1$. For a probability distribution $p$ on $A$ 
we define its {\em Shannon entropy}
$$
H(p) = \sum_{a \in A} - p(a) \log_2 p(a) =  \sum_{a \in A} p(a) \log_2 (1/p(a)).
$$
We have $0 \le H(p) \le \log_2 |A|$.
A well-known generalization of Shannon's inequality states that for every
probability distribution $p$ on $A$ and any mapping $q : A \to [0,1]$
such that $\sum_{a \in A} q(a) \le 1$ we have
\begin{equation} \label{aczel}
H(p) =  \sum_{a \in A} - p(a) \log_2 p(a) \leq  \sum_{a \in A} - p(a) \log_2 q(a);
\end{equation}
see \cite{Aczel} for a proof. Shannon's inequality is the special case where 
$q$ is a probability distribution as well.
The Kullback-Leibler divergence between two probability distributions 
$p,q$ on $A$ (see \cite[Section~2.3]{CoTh06}) is defined as
\begin{equation} \label{def:D}
D(p \KL q) = \sum_{a \in A} p(a) \cdot \log_2(p(a)/q(a)) .
\end{equation}
It is known that $D(p \KL q) \ge 0$ 
for all $p,q$ (this follows from Shannon's inequality) 
and $D(p \KL q) = 0$ if and only if $p=q$.

Let $\overline{a} = (a_1, a_2, \ldots, a_l)$ be a tuple of elements that are from some (not necessarily finite) set $S$. 
The {\em empirical distribution} $p_{\overline{a}} : \{a_1, a_2, \ldots, a_l \} \to [0,1]$ of $\overline{a}$ is defined by
$$
p_{\overline{a}}(a) = \frac{|\{i \mid 1\le i\le l,\; a_i = a \}|}{n}  .
$$
We use this (and the following) definition also for words over some alphabet by identifying a word
$w = a_1 a_2 \cdots a_l$ with the tuple $(a_1, a_2, \ldots, a_l)$.
The {\em unnormalized empirical entropy} of $\overline{a}$ is 
\begin{equation}
\label{emp-entropy}
H(\overline{a}) = n \cdot H(p_{\overline{a}})  =  - \sum_{i=1}^l \log_2 p_{\overline{a}}(a_i) .
\end{equation}
From \eqref{aczel} it follows that for a tuple $\overline{a} = (a_1, a_2, \ldots, a_l)$ 
with $a_1, \ldots, a_l \in S$ 
and real numbers $q(a) \ge 0$ ($a \in S$) with 
$\sum_{a \in \{a_1, \ldots, a_l\}} q(a) \leq 1$ 
we have
\begin{equation} \label{shannon}
\sum_{i=1}^l - \log_2 p_{\overline{a}}(a_i)  \leq \sum_{i=1}^l - \log_2 q(a_i) .
\end{equation}
We also need the famous log-sum inequality, see e.g. \cite[Theorem~2.7.1]{CoTh06} (recall our conventions $0\cdot \log(0)=0$ and $0 \cdot \log(x/0)=0$ for $x \geq 0$):

\begin{lemma}\label{logsum}
Let $a_1, a_2, \dots, a_l,b_1, b_2, \dots, b_l \geq 0$ be real numbers. Moreover, let $a = \sum_{i=1}^l a_i$ and $b=\sum_{i=1}^l b_i$. Then
\begin{align*}
 a \log_2 \left(\frac{b}{a}\right) \geq \sum_{i=1}^l a_i \log_2\left(\frac{b_i}{a_i}\right).
\end{align*}
\end{lemma}

\subsection{Trees, tree processes, and tree entropy} \label{sec-trees}

\subsubsection{Trees and contexts}
Let $\Sigma$ denote a finite non-empty alphabet of size $|\Sigma| = \sigma$. 
Later, we will need a fixed distinguished symbol from $\Sigma$ that we will denote with $\Box \in \Sigma$.
We will also need the value $\hat{\sigma} = \max\{2,\sigma\}$.
With $\mathcal{T}(\Sigma)$ we denote the set of \emph{labeled binary trees} over the alphabet $\Sigma$. Formally, it is inductively defined as the smallest set of terms over $\Sigma$ such that 
\begin{itemize}
\item $\Sigma \subseteq \T (\Sigma)$ and 
\item if $t_1, t_2 \in \mathcal{T}(\Sigma)$ and $a \in \Sigma$, then $a(t_1, t_2) \in \T (\Sigma)$.
\end{itemize}
If e.g. $\Sigma=\{a,b\}$, then $a\in \mathcal{T}(\Sigma)$ is the binary tree with a single node labeled by $a$ and $a(b(b(a,b),a),a(b,a))\in \mathcal{T}(\Sigma)$ is the binary tree depicted on the left of Figure~\ref{fig:treeandcontext}.

A {\em tree encoder} is an injective mapping $E : \T(\Sigma) \to \{0,1\}^*$ such that the range $E(\T(\Sigma))$ is prefix-free, i.e.,
there do not exist $t, t' \in \T(\Sigma)$ with $t \neq t'$ such that $E(t)$ is a prefix of $E(t')$.

With $|t|$ we denote the number of leaves
of $t$, which can be inductively defined by
$|a|=1$ and $|a(t_1, t_2)| = |t_1|+|t_2|$ for $a \in \Sigma$ and $t_1, t_2 \in \mathcal{T}(\Sigma)$.
Note that $2|t|-1$ is the number of occurrences of symbols from $\Sigma$ in $t$. Let $\T_n (\Sigma) = \{ t \in \T(\Sigma) \mid |t| = n\}$ 
 for $n \geq 1$. Note that $\T_1 (\Sigma) = \Sigma$.
 We have $|\T_n (\Sigma)| = \sigma^{2n-1}C_{n-1}$,
where $C_k$ is the $k^{\text{th}}$ Catalan number. These numbers satisfy
the following well-known asymptotic estimate
\begin{equation} \label{catalan}
C_k \sim \frac{4^k}{\sqrt{\pi} k^{\frac{3}{2}}},
\end{equation}
see e.g. \cite{FlajoletS}. In fact, we have $C_k \leq 4^k$ for all $k \geq 0$ and hence
$|\T_n (\Sigma)| \le (2\sigma)^{2n}$.

A {\em context} is a labeled binary tree, where exactly one leaf is labeled with the special symbol $x \notin \Sigma$ (called the 
{\em parameter}); all other
nodes are labeled with symbols from $\Sigma$. Formally, the set of contexts $\cT (\Sigma)$ is the smallest set such that
\begin{itemize}
\item $x \in \cT (\Sigma)$ and 
\item if $a \in \Sigma$, $c \in \cT (\Sigma)$ and $t \in \T(\Sigma)$ then also $a(c,t), a(t,c) \in \cT (\Sigma)$.
\end{itemize}
If e.g. $\Sigma=\{a,b\}$, then $x\in \cT(\Sigma)$ is the context with a single node labeled by the parameter $x$ and $a(b(b(a,b),x),a(b,a))\in \mathcal{T}(\Sigma)$ is the context depicted on the right of Figure~\ref{fig:treeandcontext}.
For a tree or context $t\in\T (\Sigma) \cup \cT (\Sigma)$ and a context $c\in\cT (\Sigma)$,
we denote by $c[t]$ the tree or context which results from $c$ by replacing the unique occurrence of the parameter $x$ by $t$.
For example $c=a(a,x)$ and $t=b(a,a)$ yield $c[t]=a(a,b(a,a))$ (with $\Sigma = \{a,b\}$).
For a context $c$ we define $|c|$ inductively by $|x|=0$ and $|a(c,t)| = |a(t,c)| = |t|+|c|$ for 
$c \in \cT (\Sigma)$ and $t \in \T(\Sigma)$. In other words,
$|c|$ is the number of leaves of $c$, where the unique occurrence of the parameter $x$ is not counted.
Note that $|c| = |c[a]|-1$, where $a \in \Sigma$ is arbitrary. 
We define $\cT_n (\Sigma)= \{ c \in \cT (\Sigma) \mid |c| = n\}$ for $n \in \mathbb{N}$. Since the set 
$\Sigma$ will not change in this paper, we use the abbreviations $\mathcal{T}$,
$\mathcal{T}_n$, $\cT$, and $\cT_n$ for $\mathcal{T}(\Sigma)$,
$\mathcal{T}_n(\Sigma)$, $\cT (\Sigma)$, and $\cT_n(\Sigma)$, respectively.

Occasionally, we will consider a binary tree or context as a graph with nodes and edges in the usual way, where each node
is labeled with a symbol from $\Sigma$ (or $x$ in the case of a context).
Note that $t \in \T_n \cup \cT_n $ has $2n-1$ nodes in total: $n$ leaves and $n-1$ internal nodes.

It is convenient to define a node $v$ of $s \in \T \cup \cT$ as a bit string that describes the path from the root to the node ($0$ means left, $1$ means right).
Formally, we define the node set $V(s)\subseteq\{0,1\}^*$ of $s \in \T \cup \cT$ 
by 
\begin{itemize}
\item $V(a)=\{\varepsilon\}$ for every $a \in \Sigma$,
\item $V(x) = \emptyset$ and
\item $V( a(s_0, s_1)) = \{ i w \mid i \in \{0,1\}, w \in V(s_i) \}\cup \{\varepsilon\}$ for every $a \in \Sigma$.
\end{itemize}
Note that for a context $c\in\cT$, the set $V(c)$ does not contain the unique node in $c$ labeled with the parameter $x$.
We use this definition due to better readability of the paper since we mostly need the set of nodes without the parameter node.
Also, it is still possible to uniquely determine from $V(c)$ the path to the parameter $x$ due to the following properties:
For a tree $t\in \T$ we have $w0\in V(t)$ if and only if $w1\in V(t)$ for all $w\in\{0,1\}^*$ since each node has zero or two children.
The only context $c$ which fulfills this property is $c=x$, i.e., the parameter node is the only node of $c$ and $V(c)=\emptyset$.
For all other contexts $c\in\cT$ this property is violated since there exists a unique $w\in\{0,1\}^*$ such that $w0\in V(c)$ (respectively, $w1\in V(c)$) and $w1\notin V(c)$ (respectively, $w0\notin V(c)$). In this case the parameter node is $w1$ (respectively, $w0$).
Alternatively, the parameter node of a context $c$ is the single node in the set $V(c[a])\setminus V(c)$ for a symbol $a \in \Sigma$. We denote this
node with $\omega(c) \in \{0,1\}^*$. In other words: $V(c[a])\setminus V(c) = \{ \omega(c) \}$.

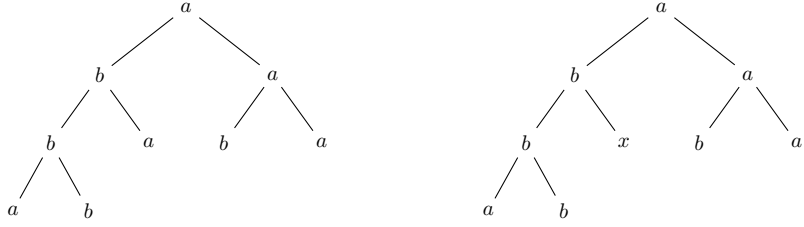
\begin{figure}
\begin{minipage}[hbt]{0.49\textwidth} 
\centering
\begin{tikzpicture}[-,level distance=9mm]
\tikzset{level 1/.style={sibling distance=23mm}}
\tikzset{level 2/.style={sibling distance=13mm}}
\tikzset{level 3/.style={sibling distance=10mm}}

\node [emptyKnot] (start){$a$}
  child {node [emptyKnot] (a) {$b$}
    child {node [emptyKnot] (c) {$b$}
      child {node[emptyKnot] (b1) {$a$}}
      child {node [emptyKnot] (h) {$b$}}
    }
    child {node[emptyKnot]  (f2) {$a$}}
  }
  child {node [emptyKnot] (b) {$a$}
     child {node[emptyKnot] (f3) {$b$}}
     child {node [emptyKnot] (e) {$a$}}
  }
;
\end{tikzpicture}
\end{minipage}
\begin{minipage}[hbt]{0.49\textwidth} 
\centering
\begin{tikzpicture}[-,level distance=9mm]
\tikzset{level 1/.style={sibling distance=23mm}}
\tikzset{level 2/.style={sibling distance=13mm}}
\tikzset{level 3/.style={sibling distance=10mm}}

\node [emptyKnot] (start){$a$}
  child {node [emptyKnot] (a) {$b$}
    child {node [emptyKnot] (c) {$b$}
      child {node[emptyKnot] (b1) {$a$}}
      child {node [emptyKnot] (h) {$b$}}
    }
    child {node[emptyKnot]  (f2) {$x$}}
  }
  child {node [emptyKnot] (b) {$a$}
     child {node[emptyKnot] (f3) {$b$}}
     child {node [emptyKnot] (e) {$a$}}
  }
;

\end{tikzpicture}
\end{minipage}
\caption{A tree (left) and a context (right).}
\label{fig:treeandcontext}
\end{figure}

\begin{example} \label{ex-tree}
Consider the tree $t=a(b(b(a,b),a),a(b,a))$ with $\Sigma=\{a,b\}$ depicted on the left of Figure~\ref{fig:treeandcontext}.We have $V(t)=\{\varepsilon, 0, 1, 00, 01, 10, 11, 000, 001\}$.
For the context $c=a(b(b(a,b),x),a(b,a))$ depicted on the right of Figure~\ref{fig:treeandcontext}, we have $t=c[a]$ and $\omega(c)=01$.
\end{example}

Consider a tree or context $s$ and let $v \in V(s)$.
The leaves of $s$ are those strings in $V(s)$ that are maximal with respect to the prefix relation. The length $|v|$ is the depth of the node $v$ in $s$
and the depth of $s$ is the maximal depth of a node in $V(s)$ (the depth of $s=x$ is not defined but also not needed).
Let $\lambda_s: V(s) \rightarrow \Sigma \times \{0,2\}$ denote the function mapping a node $v$ to the pair $(a,i)$ where $a \in \Sigma$ is the label of $v$ and $i \in \{0,2\}$ is the number of children of $v$. We can define this function inductively as follows:
\begin{itemize}
\item $\lambda_a(\varepsilon) = (a,0)$ for $a\in\Sigma$,
\item $\lambda_s(\varepsilon) = (a,2)$ for $s = a(s_0,s_1)$ with $a\in\Sigma$ and $s_0,s_1\in \T\cup\cT$,
\item $\lambda_s(iw) = \lambda_{s_i}(w)$ for $s = a(s_0,s_1)$ with $a\in\Sigma$, $s_0,s_1\in \T\cup\cT$ and $iw \in V(s)$.
\end{itemize}
Note that in the last case, if $s$ is a context, we cannot have $s_i = x$ because we must
have $w \in V(s_i)$.
In the following, we will omit the subscript $s$ in $\lambda_s(v)$ if $s$ is clear from the context. 

\subsubsection{Histories} \label{histories}
We now come to the crucial notion of the history of a node $v$ in a tree or context. Intuitively, the history of $v$ records
all information that can be obtained by walking from the root of the tree/context straight down to the node $v$.
First, we define the set of {\em histories} as
$$\mathcal{L} = (\Sigma  \{0,1\})^* =\{a_1i_1 \cdots a_ni_n \mid n \geq 0, a_k \in \Sigma, i_k \in \{0,1\} \text{ for all } 1 \leq k \leq n \}.$$ 
For an integer $k \geq 0$, let
$\mathcal{L}_k = \{w \in \mathcal{L} \mid |w| = 2k\}$ and let
$\ell_k: \mathcal{L} \rightarrow \mathcal{L}_k$ denote the partial function mapping a history $z \in \mathcal{L}$ 
with $|z| \geq 2k$ to the suffix of $z$ of length $2k$, i.e., $\ell_k(a_1i_1 \cdots a_ni_n) = a_{n-k+1} i_{n-k+1} \cdots a_ni_n$ (the function $\ell_0$ maps a string to the empty string).

For a tree $t$ and a node $v  \in V(t)$ (resp., 
a context $c$ and a node $v  \in V(c) \cup \{\omega(c)\}$), we inductively define its
\emph{history} $h(v) \in \mathcal{L}$ (in $t$) by 
\begin{itemize}
\item $h(\varepsilon) = \varepsilon$ and 
\item $h(wi) = h(w)ai$ for $i \in \{0,1\}$ and $wi \in V(t)$
(resp., $wi \in V(c) \cup \{ \omega(c)\}$). 
\end{itemize}
Here, $a$ is the symbol that labels the node $w$, i.e., 
$\lambda(w) = (a,2)$. That is, in order to obtain $h(v)$, while walking downwards in the tree from the root node to the 
node $v$ we alternately concatenate symbols from $\Sigma$ with 
binary numbers  in $ \{0, 1\}$ such that the symbol from $\Sigma$ 
corresponds to the label of the current node and the binary number $0$ (resp., $1$) states that we move on to the left (resp. right) child node.
Note that the symbol that labels $v$ is not part of the history of $v$.
The $k$-history of a tree node $v \in V(t)$ is 
$$
h_k(v) = \ell_k((\Box0)^{k} h(v)) \in \mathcal{L}_k, 
$$
i.e., the suffix of length $2k$ of the word $(\Box0)^k h(v)$, 
where $\Box$ is a fixed dummy symbol in $\Sigma$ (the choice is arbitrary). 
This means that if $|v|\ge k$ then $h_k(v)$ describes the last $k$ directions and node labels 
along the path from the root to node $v$. 
If $|v|< k$, we pad the history of $v$ with $\Box$'s and zeros such that $h_k(v) \in \mathcal{L}_k$.
In the appendix, we discuss other reasonable approaches of how to deal with nodes of depth smaller than $k$.
For $z \in \mathcal{L}_k$ we denote with 
$$V_z(t) = \{ v \in V(t) \mid h_k(v) = z\}$$ 
the set of nodes in $t$ with $k$-history $z$.

\begin{example} \label{ex-tree2}
Consider the tree $t=a(b(b(a,b),a),a(b,a))$ from Example~\ref{ex-tree} and let $\Box = a \in \Sigma$.
Then, $h(001) = h_3(001) = a 0 b 0 b 1$ and $h_4(10) = a0a0a1a0$.
\end{example}

\subsubsection{Tree processes}

A {\em tree process} is an infinite tuple $\mathcal{P} = (P_z)_{z \in \mathcal{L}}$ where 
every $P_z$ is a probability distribution on $\Sigma\times \{0,2\}$. With $\mathcal{P}$ we associate the function $\Prob_{\mathcal{P}} : \T \cup\cT \to [0,1]$
with
$$
\Prob_{\mathcal{P}}(s) = \prod_{v \in V(s)} P_{h(v)}(\lambda_s(v)) .
$$
We are mainly interested in this definition for the case that $s$ is a tree, but for technical reasons we also
have to allow contexts. Note that if $c$ is a context, then the parameter node of $c$ is not in $V(c)$ and 
therefore does not contribute to $\Prob_{\mathcal{P}}(c)$. 

A tree process can be used to randomly construct a tree from $\T$ as follows:
In a top-down way we determine for every tree node its label (from $\Sigma$) and its number of children,
where this decision depends on the history of the tree node. We start at the root node, whose history is the
empty word $\varepsilon$. If we have reached a tree node $v$ with history $z \in \mathcal{L}$ then we use the 
probability distribution $P_z$ to randomly choose a pair $(a,i) \in \Sigma\times \{0,2\}$. We assign the label $a \in \Sigma$
to $v$. If $i=0$ then $v$ becomes a leaf, otherwise the process continues at the two children $v0$ and $v1$ (whose history
is well-defined). Note that in this way we may produce infinite trees with non-zero probability (e.g. if $P_z(a,2) = 1$ for some
$a \in \Sigma$). Therefore, we only obtain an inequality instead of an equality in the following lemma (recall that $\T$
only contains finite trees).

\begin{lemma} \label{lemma-sum-trees}
Let $\mathcal{P}$ be a tree process. Then $\sum_{t \in \T} \Prob_{\mathcal{P}}(t) \leq 1$.
\end{lemma}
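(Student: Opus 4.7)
The plan is to prove the inequality by induction on tree depth, but with a strengthened statement that accounts for the history-dependence of the process. I would define, for each $d \in \N$ and each $z \in \mathcal{L}$, the partial sum
\begin{equation*}
f_d(z) \;=\; \sum_{\substack{t \in \T \\ \mathrm{depth}(t) \le d}} \prod_{v \in V(t)} P_{z \cdot h_t(v)}(\lambda_t(v)),
\end{equation*}
where $h_t(v)$ is the history of $v$ as computed inside $t$ (starting from $\varepsilon$) and $z \cdot h_t(v)$ denotes concatenation. The claim of the lemma is the $z = \varepsilon$, $d \to \infty$ case. I would prove by induction on $d$ that $f_d(z) \le 1$ for every $z \in \mathcal{L}$.

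For the base case $d = 0$, only single-node trees $a \in \Sigma$ contribute, so $f_0(z) = \sum_{a \in \Sigma} P_z(a,0) \le 1$ because $P_z$ is a probability distribution on $\Sigma \times \{0,2\}$. For the inductive step, I would split the sum according to whether the root is a leaf or an internal node. Writing $t = a(t_0,t_1)$ when the root has two children and observing that $V(t) = \{\varepsilon\} \cup 0 V(t_0) \cup 1 V(t_1)$, while $h_t(0u) = a0 \cdot h_{t_0}(u)$ and $h_t(1u) = a1 \cdot h_{t_1}(u)$, gives the factorisation
\begin{equation*}
f_{d+1}(z) \;=\; \sum_{a \in \Sigma} P_z(a,0) \;+\; \sum_{a \in \Sigma} P_z(a,2) \cdot f_d(z a 0) \cdot f_d(z a 1).
\end{equation*}
Applying the induction hypothesis to bound $f_d(za0) \cdot f_d(za1) \le 1$ and again using $\sum_{a}(P_z(a,0)+P_z(a,2)) = 1$ yields $f_{d+1}(z) \le 1$. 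Since the sequence $f_d(\varepsilon)$ is non-decreasing in $d$ and its supremum is $\sum_{t \in \T} \Prob_{\mathcal{P}}(t)$, the monotone-convergence conclusion is immediate.

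The only delicate step is the factorisation identity: one has to verify carefully that the product $\prod_{v \in V(t)} P_{z \cdot h_t(v)}(\lambda_t(v))$ really splits into the root factor $P_z(a,2)$ times the two subtree products with shifted prefixes $za0$ and $za1$. This is the place where the specific definition of history (label of parent followed by direction bit) is crucial, and once it is checked everything else is routine. An alternative, more measure-theoretic route is to interpret the process as a branching-type random construction on the infinite binary tree and note that distinct finite trees correspond to disjoint events, but making that rigorous essentially repeats the induction above, so the direct argument seems cleaner.
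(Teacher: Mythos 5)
Your proof is correct and follows essentially the same route as the paper: the paper inducts over the sets $\T'_n$ of trees of bounded depth and asserts the inductive step "follows easily", which is precisely the factorisation over the root that you carry out. Your explicit strengthening to arbitrary history prefixes $z$ is exactly the detail needed to make that inductive step go through, so you have simply filled in what the paper leaves implicit.
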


\begin{proof}
Define the set of trees $\T'_{n}$ inductively by $\T'_1 = \T_1$ and 
$$\T'_{n+1} = \T'_n \cup \{ a(t_1, t_2) \mid a \in \Sigma, t_1, t_2 \in \T'_n\}.$$
We have $\T'_n \subsetneq \T'_{n+1}$ and $\T = \bigcup_{n \ge 1} \T'_n$.
It then suffices to show $\sum_{t \in \T'_n} \Prob_{\mathcal{P}}(t) \leq 1$ for every $n \geq 1$.
This follows easily from the definition of $\Prob_{\mathcal{P}}(t)$ and the inductive definition of $\T'_n$.
\end{proof}
Lemma~\ref{lemma-sum-trees} cannot be extended to contexts, but the following bound will suffice for 
our purpose.

\begin{lemma} \label{lemma-sum-contexts}
Let $\mathcal{P}$ be a tree process. We have
$\sum_{c \in \cT_n} \Prob_{\mathcal{P}}(c) \leq n+1$ for every $n \geq 1$.
\end{lemma}

\begin{proof}
In order to bound $\sum_{c \in \cT_n} \Prob_{\mathcal{P}}(c)$, we first represent the probability of each context $c\in\cT_n$ as a sum of probabilities of trees.
So fix a context $c\in\cT_n$ for the first part of the proof.
Note first that in general no tree $t$ exists such that $\Prob_\mathcal{P}(c)\le \Prob_\mathcal{P}(t)$ (or even $\Prob_\mathcal{P}(c)=\Prob_\mathcal{P}(t)$) since $\omega(c)$ (the parameter node of $c$) does not contribute to the probability of the context $c$.
For example, the tree $c[a]$ ($a\in\Sigma$) which results from $c$ by replacing the parameter node by an $a$-labeled leaf node has probability $\Prob_\mathcal{P}(c)\cdot P_{h(\omega(c))}(a,0)\le \Prob_\mathcal{P}(c)$.
In order to bound $\Prob_{\mathcal{P}}(c)$, the idea is to replace the parameter node by all possible trees and not only by a single node.
So consider the set $c[\T]=\{c[t]\mid t\in \T \}$ of all trees that arise from $c$ by replacing the parameter by an arbitrary tree.
Unfortunately, the total probability $\sum_{t\in c[\T]}\Prob_\mathcal{P}(t)$ can still be strictly smaller than $\Prob_\mathcal{P}(c)$ since there might be infinite trees with positive probability with respect to $\mathcal{P}$.
To get rid of this problem, we fix an element $a \in \Sigma$ and modify $\mathcal{P}$ to a tree process $\mathcal{P}'=(P_z')_{z\in \mathcal{L}}$ such that (i) $P_z'=P_z$ for $|z|\le 2n$ and (ii) $P_z'(a,0)=1$ and $P_z'(a',i)=0$ for every $(a',i) \in \Sigma \times \{0,2\} \setminus \{(a,0)\}$ and $|z|>2n$.
The tree process $\mathcal{P}'$ is created such that all nodes $v$ of depth $|v|\le n$ contribute the probability $P_{h(v)}(\lambda(v))$ as before and all nodes of depth $n+1$ in a tree are $a$-labeled leaves with probability $1$.
Note first that for each context $c\in\cT_n$ and each node $v\in V(c)$ we have $|v|\le n$ and thus $P'_{h(v)}(\lambda(v))=P_{h(v)}(\lambda(v))$.
Secondly, all trees of depth larger than $n+1$ have probability $0$ with respect to $\mathcal{P}'$ (including infinite trees).
Hence, we get $\sum_{t\in\T}  \Prob_{\mathcal{P'}}(t)=1$.
We obtain
\begin{eqnarray*}
\sum_{t\in c[\T]}\Prob_\mathcal{P'}(t) & = & \sum_{t\in c[\T]} \prod_{v \in V(t)} P'_{h(v)}(\lambda(v))\\
&=&  \sum_{t\in c[\T]} \left(\prod_{v \in V(c)} P'_{h(v)}(\lambda(v)) \prod_{v \in V(t)\setminus V(c)} P'_{h(v)}(\lambda(v))\right)\\
&=& \Prob_\mathcal{P}(c) \cdot \underbrace{\sum_{t\in c[\T]}\prod_{v \in V(t)\setminus V(c)} P'_{h(v)}(\lambda(v))}_{(a)}.
\end{eqnarray*} 
We claim that $(a)$ equals $1$. To see this, consider the tree process $\mathcal{P''}=(P_z'')_{z \in \mathcal{L}}$ with $P_z''=P_{h(\omega(c))z}'$.
Also for $\mathcal{P''}$  only finite trees have non-zero probability and thus $\sum_{t\in\T }  \Prob_{\mathcal{P''}}(t)=1$.
We have
\begin{eqnarray*}
(a)&=& \sum_{t\in\T } \prod_{v \in V(t)} P'_{h(\omega(c))h(v)}(\lambda(v)) \\
&=& \sum_{t\in\T} \prod_{v \in V(t)}P_{h(v)}''(\lambda(v))\\
&=& \sum_{t\in\T}
\Prob_{\mathcal{P}''}(t)=1.
\end{eqnarray*}
It follows that $\Prob_\mathcal{P}(c)=\sum_{t\in c[\T]}\Prob_\mathcal{P'}(t)$.
In the second part of the proof it remains to bound $\sum_{c \in \cT_n} \Prob_{\mathcal{P}}(c)=\sum_{c \in \cT_n}\sum_{t\in c[\T]}\Prob_{\mathcal{P}'}(t)$.
The key point here is that for each tree $t\in\T $ there are at most $n+1$ different contexts $c\in\cT_n$ such that $t\in c[\T]$.
Note that for a tree $t$, the number of different contexts $c\in\cT_n$ such that $t\in c[\T]$ is exactly the number of nodes $v\in V(t)$ such that replacing the subtree rooted at $v$ by the parameter $x$ yields a context $c$ with $|c|=n$.
This is the same as the number of subtrees of $t$ with $|t|-n$ leaves.
Since different subtrees in $t$ of equal size do not share nodes, we can bound the number of subtrees with $|t|-n$ leaves by $|t|/(|t|-n)$.
We can assume that $|t|>n$ since otherwise there is no context $c\in \cT_n$ such that $t\in c[\T]$.
So we have $|t|=n+k$ for some $k>0$ and the number of subtrees of $t$ with $|t|-n$ leaves is at most $(n+k)/k=n/k+1\le n+1$.
We get
$$\sum_{c \in \cT_n}\sum_{t\in c[\T]}\Prob_{\mathcal{P}'}(t) \le (n+1) \sum_{t\in \T} \Prob_{\mathcal{P}'}(t) = n+1.$$
This concludes the proof of the lemma.
\end{proof}
A {\em $k^{th}$-order tree process} is a tree process $\mathcal{P} = (P_z)_{z \in \mathcal{L}}$ such that
$P_{z} = P_{z'}$ if $\ell_k((\Box0)^{k}z) = \ell_k((\Box0)^{k}z')$. Thus, the probability distribution that is chosen for a certain
tree node depends only on the $2k$ last symbols of the history of the node (where histories are padded with $\Box0$ on the left
to reach length $2k$ for the fixed symbol $\Box \in \Sigma$). 
We will identify the $k^{th}$-order tree process $\mathcal{P} = (P_z)_{z \in \mathcal{L}}$ 
with the finite tuple $(P_z)_{z \in \mathcal{L}_k}$; it contains all information about $\mathcal{P}$.
Note that for a $k^{th}$-order tree process $\mathcal{P}$ we can compute $\Prob_{\mathcal{P}}(s)$ for a tree or context $s$ as
\begin{equation} \label{eq:Prob-k-th-order}
\Prob_{\mathcal{P}}(s) = \prod_{z \in \mathcal{L}_k} \prod_{v \in V_z(s)} P_z(\lambda(v)),
\end{equation}
where the empty product (which arises in case $V_z(s) = \emptyset$) is $1$.

\subsubsection{Higher-order entropy of a tree} \label{sec:tree-entropy}

Let us fix $k \geq 0$.
We define the {\em $k^{th}$-order (unnormalized) empirical entropy} $H_k(t)$ of a tree $t \in \T_n$ as follows:
For $z \in \mathcal{L}_k$ let 
$$m^t_z = |V_z(t)|$$ 
be the number
of nodes of $t$ with $k$-history $z$
and for $\tilde{a} \in \Sigma \times \{0,2\}$ let 
\begin{equation} \label{def:m_za}
m^t_{z,\tilde{a}} = |\{ v \in V_z(t) \mid \lambda(v)=\tilde{a}\}|.
\end{equation}
We then define the {\em empirical $k^{th}$-order tree process}  $\mathcal{P}^t = (P^t_z)_{z \in \mathcal{L}_k}$
by
\begin{equation} \label{def:P_za}
P^t_z(\tilde{a}) = \frac{m^t_{z,\tilde{a}}}{m^t_z}
\end{equation}
for all $\tilde{a} \in \Sigma \times \{0,2\}$ and all
$z \in \mathcal{L}_k$ with $m^t_z > 0$.
If $m^t_z=0$, then we can define $P^t_z$ as an arbitrary distribution.
Then
\begin{equation} \label{def:H_k}
H_k(t) = \sum_{z \in \mathcal{L}_k} m^t_z H(P^t_z).                     
\end{equation}
Note that $$0 \le H_k(t) \leq (2n-1)\log_2(2\sigma) = (2n-1)(1+\log_2\sigma)$$ since $0 \le H(P^t_z) \leq \log_2(2\sigma)$ 
and $\sum_{z \in \mathcal{L}_k} m^t_z = 2n-1$. This upper bound on the entropy matches the information theoretic bound
for the worst-case output length of any tree encoder on $\T_n$.
Using the asymptotic bound \eqref{catalan} for the Catalan numbers, one sees that for any tree encoder there must exist
a tree $t \in \T_n$ which is encoded with $2 \log_2(2\sigma)n - o(n) = 2(\log_2\sigma+1)n -o(n)$ bits.
The $k^{th}$-order empirical entropy $H_k(t)$ is a lower bound on the coding length
of a tree encoder that encodes for each node the relevant information (the label of the node and the binary information
whether the node is a leaf or internal) depending on the $k$-history of the node.

\begin{example}
Let $t$ denote the binary tree $t= a(b(b(a,b),a),a(b,a))$ as depicted on the left of Figure\ref{fig:treeandcontext}. In order to compute the first order empirical entropy $H_1(t)$ of $t$, we have to consider $k$-histories of $t$ with $k=1$: Let $\Box=a$.
It follows that
 $V_{a0}(t)=\{\varepsilon,0,10\}$, $V_{b0}(t)=\{00,000\}$, $V_{a1}(t)=\{1,11\}$ and $V_{b1}(t)=\{01,001\}$. Thus, we have $m_{a0}^t=3$ and $m_{a1}^t=m_{b0}^t=m_{b1}^t=2$.
Next, for each $k$-history $z$, we consider $\lambda(v)$ for $v \in V_z(t)$: For $z=a0$, we have $\lambda(\varepsilon)=(a,2)$, $\lambda(0)=(b,2)$ and $\lambda(10)=(b,0)$. Hence, $m_{a0,(a,2)}^t=m_{a0,(b,0)}^t=m_{a0,(b,2)}^t=1$ and $H(P_{a0}^t)=\log_2(3)$. 
Analogously, we find $H(P_{b0}^t)=H(P_{a1}^t)=H(P_{b1}^t)=1/2\log_2(2)+1/2\log_2(2)=1$. Altogether, this yields $H_1(t)= 3\cdot \log_2(3) + 2\cdot 1 + 2 \cdot 1 + 2 \cdot 1$ which is roughly $9.3$.
\end{example}
One can define $H_k(t)$ alternatively in the following way:
Take a $k$-history  $z\in \mathcal{L}_k$ and enumerate the set $V_z(t)$ in an arbitrary way as
$v_1, v_2, \ldots, v_j$. Define the string $w(t,z) = \lambda(v_1) \lambda(v_2) \cdots \lambda(v_j) \in (\Sigma \times \{0,2\})^*$.
We have
$$
H_k(t) = \sum_{z \in \mathcal{L}_k} H(w(t,z)), 
$$
where the empirical entropy $H(w(t,z))$ is defined according to \eqref{emp-entropy}.

The following lemma and its proof are very similar to a corresponding statement for the $k^{th}$-order empirical
entropy of strings, see \cite{Gagie06a}. 

\begin{theorem} \label{theo-travis-for-trees}
Let $t \in \T $.
For every $k^{th}$-order tree process $\mathcal{P} = (P_z)_{z \in \mathcal{L}_k}$ with $\Prob_{\mathcal{P}}(t) > 0$
we have $$H_k(t) \leq - \log_2 \Prob_{\mathcal{P}}(t)$$ with equality
if and only if $P^t_z = P_z$ for all $z \in \mathcal{L}_k$ with $m^t_z > 0$.
\end{theorem}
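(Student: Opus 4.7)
The plan is to expand $-\log_2 \Prob_{\mathcal{P}}(t)$ using the product formula for $\Prob_{\mathcal{P}}$, then regroup the resulting sum by the $k$-history. By the product representation for a $k^{th}$-order tree process given just before the theorem, we have
\[
-\log_2 \Prob_{\mathcal{P}}(t) = -\sum_{z \in \mathcal{L}_k} \sum_{v \in V_z(t)} \log_2 P_z(\lambda(v)).
\]
The assumption $\Prob_{\mathcal{P}}(t) > 0$ guarantees that each factor is strictly positive, so each logarithm is well defined. For $z$ with $m^t_z = 0$ the inner sum is empty and contributes nothing, so we only need to consider $z$ with $m^t_z > 0$.

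Next I would collect the terms in the inner sum by the value of $\lambda(v)$. Writing $m^t_{z,\tilde{a}}$ as $m^t_z \cdot P^t_z(\tilde{a})$ by definition of the empirical $k^{th}$-order tree process, I obtain, for each $z$ with $m^t_z > 0$,
\[
-\sum_{v \in V_z(t)} \log_2 P_z(\lambda(v)) = -\sum_{\tilde{a} \in \Sigma \times \{0,2\}} m^t_{z,\tilde{a}} \log_2 P_z(\tilde{a}) = m^t_z \cdot \Bigl(-\sum_{\tilde{a}} P^t_z(\tilde{a}) \log_2 P_z(\tilde{a})\Bigr).
\]
Applying Shannon's inequality \eqref{aczel} with $p = P^t_z$ and $q = P_z$ (both are probability distributions on $\Sigma \times \{0,2\}$, so in particular $\sum_{\tilde{a}} q(\tilde{a}) \leq 1$), the bracketed expression is at least $H(P^t_z)$. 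Summing over all $z \in \mathcal{L}_k$ with $m^t_z > 0$ yields exactly $H_k(t)$, which proves the inequality.

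For the equality statement I would invoke the standard equality case of Shannon's/Gibbs' inequality: the cross-entropy of $P^t_z$ against $P_z$ equals $H(P^t_z)$ if and only if $P^t_z(\tilde{a}) = P_z(\tilde{a})$ for every $\tilde{a}$ with $P^t_z(\tilde{a}) > 0$; since $P^t_z$ and $P_z$ are both probability distributions on the same finite set, this forces $P^t_z = P_z$. Hence in the overall sum equality holds precisely when $P^t_z = P_z$ for every $z \in \mathcal{L}_k$ with $m^t_z > 0$, while $P_z$ for $z$ with $m^t_z = 0$ plays no role. There is no real obstacle here — the only thing requiring care is to justify that the logarithms are defined (which uses $\Prob_{\mathcal{P}}(t) > 0$) and to handle the empty-history classes $m^t_z = 0$ cleanly so that they neither contribute to $-\log_2\Prob_{\mathcal{P}}(t)$ nor to $H_k(t)$ and thus do not affect the equality characterization.
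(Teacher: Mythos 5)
Your proposal is correct and follows essentially the same route as the paper: expand $-\log_2\Prob_{\mathcal{P}}(t)$ via the product formula, regroup by $k$-history and by $\lambda$-value, and apply Gibbs'/Shannon's inequality per history class (the paper phrases this as the decomposition into $D(P^t_z\,\|\,P_z)+H(P^t_z)$ with $D\ge 0$, which is the same fact). Your handling of the equality case and of the classes with $m^t_z=0$ matches the paper's as well.
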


\begin{proof}
We have
\begin{eqnarray*}
- \log_2 \Prob_{\mathcal{P}}(t) & \stackrel{\text{\eqref{eq:Prob-k-th-order}}}{=} & \sum_{z \in \mathcal{L}_k} \sum_{v \in V_z(t)} \log_2(1/P_z(\lambda(v)) \\
& \stackrel{\text{\eqref{def:m_za}}}{=} &  \sum_{z \in \mathcal{L}_k} \sum_{\tilde{a} \in \Sigma \times \{0,2\}} m^t_{z,\tilde{a}} \log_2(1/P_z(\tilde{a})) \\
& \stackrel{\text{\eqref{def:P_za}}}{=} &  \sum_{z \in \mathcal{L}_k} m^t_z \!\!\! \sum_{\tilde{a} \in \Sigma \times \{0,2\}} \!\!\! P^t_z(\tilde{a}) \cdot ( \log_2(P^t_z(\tilde{a})/P_z(\tilde{a})) + \log_2(1/P^t_z(\tilde{a}))) \\
&  \stackrel{\text{\eqref{def:D}}}{=} &  \sum_{z \in \mathcal{L}_k} m^t_z  \cdot (D(P^t_z \KL P_z) + H(P^t_z)) \\
& \stackrel{\text{\eqref{def:H_k}}}{\ge} &  H_k(t) 
\end{eqnarray*}
with equality in the last line if and only if $P^t_z = P_z$ for all $z \in \mathcal{L}_k$ with $m_z^t > 0$.
\end{proof}

\section{Tree straight-line programs and compression of binary trees} \label{sec-TSLP}

 We now introduce tree straight-line programs  and use them for the compression of binary trees.
 
\subsection{General tree straight-line programs}

 Let $V$ be a finite alphabet of symbols, where each symbol $A\in V$ has an associated rank $0$ or $1$ (we also speak of a ranked alphabet).
 The elements of $V$ are called \emph{nonterminals}.
 We assume that $V$ contains at least one nonterminal of rank $0$ and that $V$ is disjoint from the set $\Sigma \cup \{x\}$, which are the labels used for binary trees and contexts.
 We use $V_0$ (resp., $V_1$) for the set of nonterminals of rank $0$ (resp., of rank $1$).
 The idea is that nonterminals from $V_0$ (resp., $V_1$) derive to trees from $\T$ (resp., 
 contexts from $\cT$).
 We denote by $\T_V(\Sigma)$ the set of trees over $\Sigma \cup V$,
i.e., each node in a tree $t\in\T_V(\Sigma)$ is labeled with a symbol from $\Sigma \cup V$ such that nodes labeled by symbols from $\Sigma$ have zero or two children and
if a node is labeled by a symbol from $V$, then the number of children of this node corresponds to the rank of its label (a formal definition follows). 
With $\cT_V(\Sigma)$ we denote the corresponding set of all contexts, i.e., the set of trees
over  $\Sigma \cup \{x\} \cup V$, where the parameter symbol $x$ occurs exactly once and at a leaf position.
Formally, we define $\T_V(\Sigma)$ and $\cT_V(\Sigma)$
as the smallest sets of formal expressions with the following  conditions, where here and in the rest of the paper we use
the abbreviations $\T_{V}$ for $\T_V(\Sigma)$ and $\cT_V$ for $\cT_V(\Sigma)$:
\begin{itemize}
\item $\Sigma \cup V_0 \subseteq \T_V$ and $x \in \cT_V$,
\item if $a \in \Sigma$, $A\in V_1$ and $t_1,t_2\in\T_V$ then $A(t_1), a(t_1,t_2)\in\T_V$, and
\item if $a \in \Sigma$, $A\in V_1$, $s \in\cT_V$ and $t \in \T_V$ then $A(s), a(s,t), a(t,s) \in\cT_V$.
\end{itemize}
If e.g. $\Sigma=\{a,b\}$, $V_0=\{A\}$ and $V_1=\{B\}$, then $B(a(b(A,b),B(a)))\in\T_V$ and $B(a(b(A,b),B(x)))\in\cT_V$ as depicted in Figure~\ref{fig:treeswithnonterminals}.
Note that $\T(\Sigma) \subseteq \T_V(\Sigma)$ and $\cT(\Sigma) \subseteq \cT_V(\Sigma)$. 

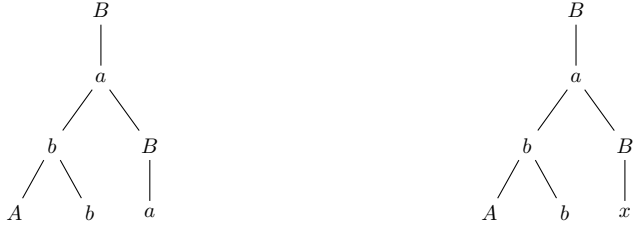
\begin{figure}
\begin{minipage}[hbt]{0.49\textwidth} 
\centering
\begin{tikzpicture}[-,level distance=9mm]
\tikzset{level 1/.style={sibling distance=23mm}}
\tikzset{level 2/.style={sibling distance=13mm}}
\tikzset{level 3/.style={sibling distance=10mm}}

\node [emptyKnot] (start){$B$}
  child {node [emptyKnot] (a) {$a$}
    child {node [emptyKnot] (c) {$b$}
      child {node [emptyKnot] (d) {$A$}}
      child {node [emptyKnot] (e) {$b$}}
    }
    child {node[emptyKnot]  (f) {$B$}
      child {node [emptyKnot] (g) {$a$}}
    }
  }
;
\end{tikzpicture}
\end{minipage}
\begin{minipage}[hbt]{0.49\textwidth} 
\centering
\begin{tikzpicture}[-,level distance=9mm]
\tikzset{level 1/.style={sibling distance=23mm}}
\tikzset{level 2/.style={sibling distance=13mm}}
\tikzset{level 3/.style={sibling distance=10mm}}

\node [emptyKnot] (start){$B$}
  child {node [emptyKnot] (a) {$a$}
    child {node [emptyKnot] (c) {$b$}
      child {node [emptyKnot] (d) {$A$}}
      child {node [emptyKnot] (e) {$b$}}
    }
    child {node[emptyKnot]  (f) {$B$}
      child {node [emptyKnot] (g) {$x$}}
    }
  }
;
\end{tikzpicture}\end{minipage}
\caption{Elements of $\T_V$ (left) and $\cT_V$ (right), where $a,b\in\Sigma$, $A\in V_0$ and $B\in V_1$.}
\label{fig:treeswithnonterminals}
\end{figure}

 A \emph{tree straight-line program} $\G$, or \emph{TSLP} for short, is a tuple $(V, A_0, r)$,
 where $A_0 \in V_0$ is the start nonterminal and 
$r:V\to (\T_V \cup \cT_V)$ is a function which assigns to each nonterminal its unique right-hand side.
It is required that if $A\in V_0$ (resp., $A \in V_1$), 
then $r(A) \in \T_V$ (resp., $r(A) \in \cT_V$).
Furthermore, the binary relation $\{ (A,B) \in V \times V \mid  B\text{ occurs in }r(A)\}$ has to be acyclic.
These conditions ensure that exactly one tree is derived from the start nonterminal $A_0$ 
 by using the rewrite rules $A \to r(A)$ for $A \in V$.
To define this formally, we define $\val_{\G}(t) \in \T$ for $t \in \T_V$ and 
$\val_{\G}(t) \in \cT$ for $t \in \cT_V$ inductively by the following rules:
\begin{itemize}
\item $\val_{\G}(a) = a$ for $a \in \Sigma$ and $\val_{\G}(x)=x$,
\item $\val_{\G}(a(t_1, t_2)) = a( \val_{\G}(t_1), \val_{\G}(t_2))$ for $a \in \Sigma$ and
$t_1, t_2 \in  \T_V \cup \cT_V$ (and $t_1 \in \T_V$ or $t_2 \in \T_V$ since there is at most one parameter in $a(t_1,t_2)$),
\item $\val_{\G}(A) = \val_{\G}(r(A))$ for $A \in V_0$,
\item $\val_{\G}(A(s)) = \val_{\G}(r(A)) [\val_{\G}(s) ]$ for $A \in V_1$ and $s \in  \T_V \cup \cT_V$ 
(note that $\val_{\G}(r(A))$ is a context $c$, so we can build $c[\val_{\G}(s) ]$).
\end{itemize}
 The tree defined by $\G$ is $\val(\G) = \val_{\G}(A_0)\in\T$. 
 
\begin{example}\label{example:TSLP}
	Let $\Sigma= \{a,b\}$ and 
	$\G = (\{A_0,A_1,A_2\}, A_0, r)$ be a TSLP such that $A_0,A_1 \in V_0, A_2 \in V_1$ and
	\[
	r(A_0)=a(A_1,A_2(b)),\; r(A_1) = A_2(A_2(b)), \; r(A_2) = b(x,a).
	\]
	We get $\val_{\G}(A_2) = b(x,a)$, $\val_{\G}(A_1) = b(b(b,a), a)$ and 
	$\val({\G}) = \val_{\G}(A_0) = a( b(b(b,a), a), b(b,a))$.
\end{example}

\subsection{Tree straight-line programs in normal form} \label{sec-normal-form}

In this section, we will use TSLPs in a certain normal form, which we introduce first.

A TSLP $\G = (V, A_0, r)$ is in {\em normal form} if the following conditions hold:
\begin{itemize}
\item $V = \{ A_0, A_1, \ldots, A_{m-1}\}$ for some $m \in \N$, $m \geq 1$.
\item For every $A_i \in V_0$, 
the right-hand side $r(A_i)$ is an expression of the form 
$A_j(\alpha)$, where $A_j \in V_1$ and $\alpha \in V_0 \cup \Sigma$.
\item For every $A_i \in V_1$ 
the right-hand side $r(A_i)$ is an expression of the form 
$A_j(A_k(x))$, $a(\alpha,x)$, or $a(x,\alpha)$,
where $A_j,A_k \in V_1$, $a \in \Sigma$ and $\alpha \in V_0 \cup \Sigma$.  
\item For every $A_i \in V$ define the word $\rho(A_i) \in (V \cup \Sigma)^*$ as follows:
$$
\rho(A_i) = \begin{cases}
A_j \alpha & \text{ if } r(A_i) = A_j(\alpha) \\
A_j A_k & \text{ if } r(A_i) = A_j(A_k(x)) \\
a\alpha & \text{ if } r(A_i) = a(\alpha,x) \text{ or } a(x,\alpha)
\end{cases}
$$
Let $\rho_{\G} = \rho(A_0) \rho(A_1) \cdots \rho(A_{m-1}) \in (\Sigma \cup \{A_1,A_2,\ldots, A_{m-1}\})^*$. 
Then we require that $\rho_{\G}$ is of the form 
$\rho_{\G} = A_1 u_1 A_2 u_2 \cdots A_{m-1}  u_{m-1}$ 
with $u_i \in (\Sigma \cup \{A_1,  A_2,\ldots, A_i \})^*$.
\item $\val_{\G}(A_i) \neq \val_{\G}(A_j)$ for $i \neq j$
\end{itemize}
We also allow the TSLP $\G_a = (\{A_0\}, A_0, A_0 \mapsto a)$ for every $a \in \Sigma$ in order to get the singleton tree $a$.
In this case, we set $\rho_{\G_a} = \rho(A_0) =  a$.

Let $\G = (V,A_0,r)$ be a TSLP in normal form with $V = \{ A_0, A_1, \ldots, A_{m-1}\}$ for the further definitions.
We define the size of $\G$ as $|\G| = |V| =m$.
Thus $2|\G|$ is the length of $\rho_{\G}$.
Let $\omega_{\G}$ be the word obtained from $\rho_{\G}$ by removing the first (i.e., left-most)
occurrence of $A_i$ from $\rho_{\G}$  for every $1\le i \le m-1$. Thus, if $\rho_{\G} = A_1 u_1 A_2 u_2 \cdots A_{m-1}  u_{m-1}$ 
with $u_i \in (\Sigma \cup \{ A_1,  A_2,\ldots, A_i \})^*$, then $\omega_{\G} = u_1 u_2 \cdots u_{m-1}$.
Note that $|\omega_{\G}| = |\rho_{\G}|-m+1 = m+1$.
The {\em entropy} $H(\G)$ of the normal form TSLP $\G$ is defined as the empirical unnormalized entropy of the word $\omega_{\G}$
(see \eqref{emp-entropy}):
$$
H(\G) = H(\omega_{\G}) .
$$
\begin{example}\label{example:TSLPnormalform}
Let $\Sigma= \{a,b\}$ and $\G=(\{A_0,A_1,A_2,A_3,A_4\},A_0,r)$ be the normal form TSLP with $A_0,A_2,A_3 \in V_0, A_1,A_4 \in V_1$ and
\begin{eqnarray*}
&&r(A_0)=A_1(A_2),\; r(A_1) = a(x,A_3), \; r(A_2) = A_4(A_3),\\
&&r(A_3)=A_4(b), \; r(A_4)=b(x,a).
\end{eqnarray*}
We have $\val(\G)=a( b(b(b,a), a), b(b,a))$, $\rho_{\G}=A_1A_2aA_3A_4A_3A_4bba$ ($u_1=u_3=\varepsilon$, $u_2 = a$, $u_4=A_3A_4bba$), $|\G|=5$ and $\omega_\G=aA_3A_4bba$. 
\end{example}

The {\em derivation tree} $T_{\G}$ of the normal form TSLP $\G$  is a binary tree with node labels from $V \cup \Sigma$.
The root is labeled with $A_0$. 
Nodes labeled with a symbol from $ \Sigma$ are the leaves of $T_{\G}$.
A node $v$ that is labeled with a nonterminal $A_i$ has $|\rho(A_i)|=2$ many children.
If $\rho(A_i) = \alpha \beta$ with $\alpha,\beta \in V \cup \Sigma$, then the left child of $v$ is labeled with $\alpha$ and the right child is labeled with $\beta$.
For every node $u$ of $T_{\G}$ we define the tree or context $s_u = \val_{\G}(\alpha)$
where $\alpha \in V \cup \Sigma$ is the label of $u$. If $\alpha \in V_0 \cup \Sigma$ then
$s_u \in \T$ and if $\alpha \in V_1$ then $s_u \in \cT$.
%%%
\iffalse
%%%
The tree $s_u$ can be also inductively defined by the following rules:
\begin{itemize}
\item If $u$ is labeled with $a \in \Sigma$ then $s_u = a$.
\item If $u$ is labeled with $A_i \in V_1$ and has a single child node $v$, then $s_u = a(s_v,x)$ if $r(A_i)=a(\alpha,x)$ and $s_u=a(x,s_v)$ if $r(A_i)=a(x,\alpha)$
(note that $r(A_i)$ must be of the form $a(x,\alpha)$ or $a(\alpha,x)$).
\item If $u$ has the left child $u_1$ and the right child $u_2$, then $s_u = s_{u_1}[s_{u_2}]$ (note that if a node has two children in the derivation tree, then the left child $u_1$ is labeled with a nonterminal from $V_1$ and thus $s_{u_1}$ is a context from $\cT_V$).
\end{itemize}
%%%%
\fi
%%%%
An {\em initial subtree} of the derivation tree $T_{\G}$ is a tree that can be obtained from $T_{\G}$ as follows:
Take a subset $U$ of the nodes of $T_{\G}$ and remove from $T_{\G}$ all proper descendants of nodes from $U$, i.e., all nodes that are located strictly below a node from $U$.

\begin{example} \label{example-derivation-tree}
Let $\G$ be the normal form TSLP from Example~\ref{example:TSLPnormalform}.
The derivation tree $T_{\G}$ is shown in Figure~\ref{fig-derivation-tree} on the left; an initial subtree $T'$ of it is shown on the right.
\end{example}

\begin{figure}[t]
		\tikzset{level 1/.style={sibling distance=24mm}}
		\tikzset{level 2/.style={sibling distance=12mm}}
		\tikzset{level 3/.style={sibling distance=6mm}}  
		\tikzset{level 4/.style={sibling distance=4mm}}
		\hspace*{\fill}
		\begin{tikzpicture}[scale=1,auto,swap,level distance=8mm]
		\node (eps) {$A_0$} 
		child {node {$A_1$}
		  child {node{$a$} }
			child {node {$A_3$}
				child {node {$A_4$}
					child {node{$b$}}
					child {node{$a$}}
				}
				child {node {$b$}}
			}
		}
		child {node {$A_2$}
			child {node {$A_4$}
			   child {node {$b$}}
				child {node {$a$}}
			}
			child {node {$A_3$}
				child {node {$A_4$}
				 child{node{$b$}}
					child {node{$a$}}
				}
				child {node {$b$}}
			}
		}
		;
		{label fig one};
		\end{tikzpicture}
		\hspace*{\fill}
		\begin{tikzpicture}[scale=1,auto,swap,level distance=8mm]
		\node (eps) {$A_0$} 
		child {node {$A_1$} 
		  child{node {$a$}}
			child {node {$A_3$}			 
				child {node {$A_4$}
				  child {node{$b$}}
					child {node{$a$}}
				}
				child {node {$b$}}
			}
		}
		child {node {$A_2$}
			child {node {$A_4$}
			}
			child {node {$A_3$}
			}
		}
		;
		{label fig one};
		\end{tikzpicture}
		\hspace*{\fill}
		\caption{The derivation tree $T_\G$ of the TSLP from Example~\ref{example-derivation-tree} (left) and an initial subtree $T'$ of $T_\G$ (right).}
		\label{fig-derivation-tree}
	\end{figure}
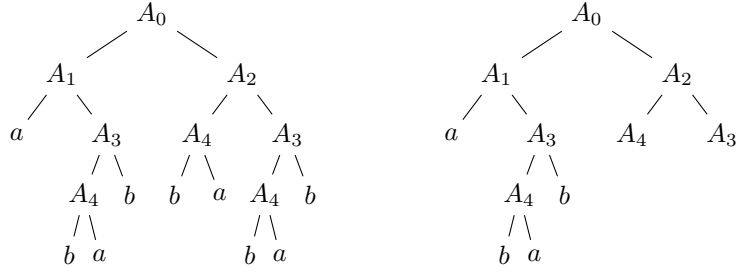

\begin{lemma} \label{lemma-number-leaves}
Let $\G$ be a TSLP in normal form with $t = \val(\G)$.
Let $T'$ be an initial subtree of $T_\G$ and let $v_1, \ldots, v_l$
be the sequence of all leaves of $T'$ (in left-to-right order). Then
$2|t| \geq \sum_{i=1}^l |s_{v_i}|$.
\end{lemma}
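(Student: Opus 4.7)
The plan is to strengthen the statement and prove it by structural induction on the derivation tree. For every node $u$ of $T_\G$ and every initial subtree $T'$ of the subtree $T_u$ of $T_\G$ rooted at $u$, with leaves $v_1,\ldots,v_l$, the stronger claim to prove is
\[
\sum_{i=1}^l |s_{v_i}|\;\le\;
\begin{cases} 2|s_u| - 1 & \text{if } s_u \in \T, \\ 2|s_u| & \text{if } s_u \in \cT. \end{cases}
\]
Applied to the root of $T_\G$, which is labelled with the start nonterminal $A_0\in V_0$ and hence satisfies $s_u = t\in\T$, this yields $\sum_i |s_{v_i}|\le 2|t|-1\le 2|t|$. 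The degenerate case $\G=\G_a$, in which $T_\G$ consists of a single node and $|t|=1$, is immediate.

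For the base case $T'=\{u\}$, the sole leaf is $u$ itself, so the sum equals $|s_u|$; the inequality reduces to $|s_u|\ge 1$ when $s_u\in\T$ and to $|s_u|\ge 0$ when $s_u\in\cT$, both clear from the definitions. In the inductive step, $u$ has two children $u_1,u_2$ in $T_\G$ and in $T'$, and the restrictions of $T'$ to the subtrees $T_{u_1}$ and $T_{u_2}$ are initial subtrees $T'_1,T'_2$ of those subtrees whose leaves partition the leaves of $T'$, so the induction hypothesis applies to each. I then perform a case analysis on the shape of the right-hand side $r(A_i)$, where $A_i$ labels $u$, using the size identities $|c[t']|=|c|+|t'|$ and $|c_1[c_2]|=|c_1|+|c_2|$ together with the direct computation $|a(s',x)|=|a(x,s')|=|s'|$. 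In each of the four normal-form cases -- $V_0$ with $r(A_i)=A_j(\alpha)$; $V_1$ with $r(A_i)=A_j(A_k(x))$; and the two variants of $V_1$ with $r(A_i)=a(\alpha,x)$ or $a(x,\alpha)$ -- the two inductive bounds combine so that their offsets (either $-1$ or $0$) produce exactly the bound claimed for $s_u$. For instance, in the $V_0$ case, $s_{u_1}\in\cT$ and $s_{u_2}\in\T$, so the IH yields $2|s_{u_1}|+(2|s_{u_2}|-1)=2|s_u|-1$, matching $s_u\in\T$; in the $V_1$ case $r(A_i)=a(\alpha,x)$, we have $|s_{u_1}|=1$ and $|s_u|=|s_{u_2}|$, and the IH gives $(2-1)+(2|s_{u_2}|-1)=2|s_u|$, matching $s_u\in\cT$.

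The main obstacle is identifying this strengthening. A uniform bound $\sum|s_{v_i}|\le 2|s_u|$ does not close under the induction: in the $V_0$ case the children already provide $2|s_u|-1$, and that spare $-1$ is genuinely needed by ancestors whose rule introduces a fresh $\Sigma$-leaf; meanwhile in the $V_1$-rule $r(A_i)=a(\alpha,x)$ the $\Sigma$-labelled child $u_1$ forces the sum to actually reach $2|s_u|$, so no slack can be promised there. Tracking whether $s_u$ is a tree or a context precisely matches the four rule shapes and makes every inductive case close cleanly. Intuitively the strengthening is tight: the sum is maximized by $T'=T_\G$, where it equals the number of $\Sigma$-labelled leaves of $T_\G$, namely $2|t|-1$, the total number of nodes of $t$.
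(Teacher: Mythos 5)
Your proof is correct, but it takes a different route from the paper's. You prove, by structural induction over the derivation tree, the strengthened invariant that for any initial subtree of $T_u$ the leaf sum is at most $2|s_u|-1$ when $s_u\in\T$ and at most $2|s_u|$ when $s_u\in\cT$, closing each of the four normal-form rule shapes by hand. The paper instead argues globally: it observes that the leaves of the subtree $T_u$ of the derivation tree are in bijection with the nodes of $s_u$ (so $|T_u|=2|s_u|-1$ for trees and $|T_u|=2|s_u|$ for contexts, whence $|T_u|\ge 2|s_{u}|-1$ always), notes that the subtrees $T_{v_1},\ldots,T_{v_l}$ rooted at the leaves of $T'$ partition the $2|t|-1$ leaves of $T_\G$, and then uses $2|s_{v_i}|-1\ge |s_{v_i}|$ (valid since every $s_{v_i}$ lies in $\T\cup(\cT\setminus\{x\})$ and hence has $|s_{v_i}|\ge 1$). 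The two arguments hinge on the same tree-versus-context size dichotomy, but the paper's leaf-counting identity lets it avoid both the case analysis on rule shapes and the need to quantify over all initial subtrees of all $T_u$; your version is longer but fully self-contained, makes explicit the invariant that the paper's bijection claim leaves unproved, and correctly identifies why the naive uniform bound $2|s_u|$ would not close under the induction.
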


\begin{proof}
Let $u$ be a node of $T_\G$ and let $T_u$ be the subtree of $T_\G$ rooted in $u$.
Then, the nodes of $s_u$  are in a one-to-one correspondence with the leaves of $T_u$, that is, if $s_u \in \T$, we have $2|s_u|-1=|T_u|$ and if $s_u \in \cT$, we have $2|s_u| = |T_u|$ (recall that $|T_u|$ is the number of leaves of $T_u$).
Thus,  $2|s_u|-1 \leq |T_u|$. 
Since $T'$ is an initial subtree of $T_\G$ we get
$2|t|-1 = 2|\val(\G)|-1 = |T_\G| = \sum_{i=1}^l |T_{v_i}| \geq  \sum_{i=1}^l (2|s_{v_i}|- 1)$. Since $|s_{v_i}| \geq 1$ we get
$2|t| \geq \sum_{i=1}^l 2|s_{v_i}|-l +1 \geq \sum_{i=1}^l |s_{v_i}| +1$ and the statement follows.  
\end{proof}

A {\em grammar-based tree compressor} is an algorithm $\psi$ that produces for a given tree 
$t \in \T$ a TSLP $\mathcal{G}_t$ in normal form such that $t = \val(\G_t)$.
It is not hard to show that every TSLP can be transformed with a linear size increase into a normal form
TSLP that derives the same tree. For example, the TSLP from Example~\ref{example:TSLP} is transformed into the normal form TSLP described in Example~\ref{example:TSLPnormalform}.
We will not use this fact, since all we need is the following theorem from  \cite{GanardiHJLN17}
(recall that $\hat\sigma = \max\{2,\sigma\}$):

\begin{theorem} \label{thm-compression-ratio}
There exists a grammar-based compressor $\psi$ (working in linear  time) with 
$\max_{t \in \T_{n}} |\mathcal{G}_t| \leq  \mathcal{O}(n / \log_{\hat\sigma} n)$.
\end{theorem}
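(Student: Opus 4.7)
The plan is a two-level micro/macro construction, in the spirit of Rytter's $O(n/\log_\sigma n)$ SLP for strings. First, partition the input tree $t \in \T_n$ into $\mathcal{O}(n / b)$ many small pieces, each either a subtree in $\T$ or a context in $\cT$ with at most $b$ leaves, where $b := \lceil \log_{\hat\sigma} n / 5\rceil$. Such a partition is computable in linear time by a bottom-up sweep of $t$: traverse in post-order, and as soon as the processed region below the current node has accumulated roughly $b$ leaves, cut it off and continue; paths (``spines'') in the tree are processed separately so that they contribute contexts of comparable size. With careful bookkeeping this yields a macro tree $M$ with $\mathcal{O}(n/\log_{\hat\sigma} n)$ nodes, each node labelled by the identifier of its associated micro piece, and such that every piece has $\Theta(b)$ leaves except for a negligible boundary fraction.

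Second, introduce TSLP nonterminals. For the micro layer, the bound $|\T_k(\Sigma)| \le (2\hat\sigma)^{2k}$ from the discussion around \eqref{catalan} (and its analogue for contexts) gives at most $(2\hat\sigma)^{2b} \le \hat\sigma^{4b} = \mathcal{O}(n^{4/5})$ distinct micro pieces. Create one nonterminal per distinct piece; each right-hand side has size $\mathcal{O}(b) = \mathcal{O}(\log_{\hat\sigma} n)$, so the micro layer contributes $\mathcal{O}(n^{4/5} \log_{\hat\sigma} n) = o(n / \log_{\hat\sigma} n)$ to the size. For the macro layer, build a TSLP that derives $M$, treating the micro identifiers as leaf labels; since $|M| = \mathcal{O}(n/\log_{\hat\sigma} n)$, any linear-size TSLP construction for $M$ yields $\mathcal{O}(n/\log_{\hat\sigma} n)$ further nonterminals. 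Adding the two layers gives $|\mathcal{G}_t| = \mathcal{O}(n/\log_{\hat\sigma} n)$.

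Third, convert the resulting TSLP into the normal form of Section~\ref{sec-normal-form}. This is a routine linear-time post-processing: each rule is split into rules of the restricted shapes $A_j(\alpha)$, $A_j(A_k(x))$, $a(\alpha,x)$, or $a(x,\alpha)$ at the cost of only a constant factor; duplicate nonterminals (those with $\val_\G(A_i) = \val_\G(A_j)$) are merged via a bottom-up hash on right-hand sides; and the required ordering constraint on $\rho_\G$ is obtained by topologically sorting the nonterminals along the dependency relation, which is acyclic by definition.

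The main obstacle is the micro partition step: one must simultaneously guarantee that pieces are small (so the counting argument $(2\hat\sigma)^{2b} = o(n/\log_{\hat\sigma} n)$ works), large enough (so $|M| = \mathcal{O}(n/\log_{\hat\sigma} n)$), and computable in linear time. The key insight is that a binary tree admits cuts both at subtree roots (yielding $\T$-pieces) and along paths (yielding $\cT$-pieces); the two cut types can be combined in a greedy, amortised fashion so that no ``tiny'' leftover pieces remain. This is exactly the technical content of \cite{GanardiHJLN17}, which is why the theorem is quoted here rather than re-derived.
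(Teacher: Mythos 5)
The paper offers no proof of Theorem~\ref{thm-compression-ratio}: it is imported as a black box from \cite{GanardiHJLN17}, and your proposal, by its own admission, ultimately does the same — the step you describe as ``exactly the technical content of \cite{GanardiHJLN17}'' \emph{is} the theorem. The peripheral parts of your sketch are fine: the count of at most $(2\hat\sigma)^{2b}$ distinct pieces with at most $b$ leaves, the observation that a micro layer of $\mathcal{O}(n^{4/5})$ nonterminals with right-hand sides of size $\mathcal{O}(\log_{\hat\sigma} n)$ costs only $o(n/\log_{\hat\sigma} n)$, and the routine normal-form conversion. But those are the easy parts, and what remains is not ``careful bookkeeping''.

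Where your sketch does commit to details, they do not fit together. A piece of a node partition whose \emph{pattern} lies in $\T$ or $\cT$ can still have many downward attachment edges in $t$: every leaf of the pattern that is an internal node of $t$ contributes two edges into other pieces, on top of the single hole of a context piece. So a generic piece is a pattern with $j\ge 2$ attachment points, and ``one nonterminal per distinct piece'' would require nonterminals of rank $j$, which the TSLP formalism of Section~\ref{sec-TSLP} does not provide (ranks are $0$ and $1$ only). If you instead write out each piece's right-hand side with its specific child nonterminals substituted at the $j$ attachment points, then two pieces sharing a pattern no longer share a right-hand side, no compression takes place, and the total grammar size degenerates to $\mathcal{O}(n)$. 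Likewise, your macro tree $M$ is not a binary tree over $\Sigma$, so ``any linear-size TSLP construction for $M$'' is not something the present formalism supplies without further work. Circumventing all of this --- via edge-partitioning clusters with at most two boundary nodes and a bounded number of clusters meeting at each boundary node, or via the recursive bisection with merging of low-level nonterminals that \cite{GanardiHJLN17} actually uses --- is precisely the nontrivial content of the statement, and it is the one thing your proof does not supply.
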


\subsection{Binary coding of TSLPs in normal form} \label{sec-binary-coding}

In this section we fix a binary encoding for normal form TSLPs. This encoding
is similar to the one for TSLPs producing unlabeled binary trees \cite{HuckeL17} (which in 
turn is based on the encoding for SLPs from \cite{KiYa00} and the encoding of DAGs from \cite{ZhangYK14}).
Let $\G = (V,A_0,r)$ be a TSLP in normal form with $m = |V| = |\G|$ nonterminals. 
We define the type $\type(A_i) \in \{0,1,2,3\}$ of a nonterminal $A_i \in V$ 
as follows: 
$$
\type(A_i) = \begin{cases}
0 & \text{ if } \rho(A_i) \in V_1 (V_0 \cup \Sigma) \\
1 & \text{ if } \rho(A_i) \in V_1 V_1 \\
2 & \text{ if } r(A_i) = a(\alpha,x) \text{ for some } \alpha \in V_0 \cup \Sigma \text{ and } a \in \Sigma\\
3 & \text{ if } r(A_i) = a(x,\alpha) \text{ for some } \alpha \in V_0 \cup \Sigma \text{ and } a \in \Sigma
\end{cases}
$$
We define the binary word $B(\G) = w_0 w_1 w_2 w_3 w_4$, where the words $w_i  \in \{0,1\}^+$, $0\le i \le 4$, are defined as follows:
\begin{itemize}
\item $w_0 = 0^{m-1}1$
\item $w_1 = a_0 b_0 a_1 b_1 \cdots a_{m-1} b_{m-1}$, where $a_j b_j$ 
is the 2-bit binary encoding of $\type(A_j)$. Note that $|w_1| = 2m$.
\item Let $\rho_{\G} = A_1 u_1 A_2 u_2 \cdots A_{m-1}  u_{m-1}$ with $u_i \in (\Sigma \cup \{ A_1,  A_2,\ldots, A_i \})^*$.
Then $w_2 = 1 0^{|u_1|} 1 0^{|u_2|} \cdots 1 0^{|u_{m-1}|}$. Note that $|w_2| = 2m$.
\item For $1\le i \le m-1$ let $k_i = |\rho_{\G}|_{A_i} \geq 1$ be the number of occurrences of the nonterminal $A_i$ in the word $\rho_{\G}$. Moreover, fix a total ordering on $\Sigma$. For $1 \leq i \leq \sigma$, let  $a_i$ denote the $i^{th}$ symbol in $\Sigma$ according to this ordering and let $l_i = |\rho_{\G}|_{a_i} \geq 0$ be the number of occurences of the symbol $a_i$ in the word $\rho_{\G}$. 
Then $w_3 = 0^{k_1-1} 1 0^{k_2-1} 1 \cdots 0^{k_{m-1}-1} 1 0^{l_1}1 0^{l_2}1 \cdots 0^{l_{\sigma}}1 $. 
Note that $|w_3| = 2m  + \sigma$.
\item The word $w_4$ encodes the word $\omega_{\G}$ using the well-known enumerative encoding \cite{Cover73}.
Every nonterminal $A_i$, $1\le i\le m-1$, has $\eta(A_i) := k_i-1$ occurrences in $\omega_{\G}$. Every symbol $a_i \in \Sigma$, $1 \leq i \leq \sigma$, has $\eta(a_i) = l_i$ occurences in $\omega_{\G}$. 
Let $S$ be the set of words over the alphabet $\Sigma \cup \{A_1,\ldots,A_{m-1}\}$ with $\eta(a_i)$ occurrences of $a_i \in \Sigma$ 
($1 \leq i \leq \sigma$) and 
$\eta(A_i)$ occurrences of $A_i$ ($1\le i \le m-1$). Hence, 
\begin{equation} \label{size-S}
|S| = \frac{(m+1)!}{ \prod_{i=1}^{\sigma}\eta(a_i)! \prod_{i=1}^{m-1} { \eta(A_i)!}} .
\end{equation}
Let $v_0, v_1, \ldots, v_{|S|-1}$ be the lexicographic enumeration of the words from $S$ with respect to 
the alphabet order $a_1, \dots, a_{\sigma}, A_1,\ldots,A_{m-1}$.
Then $w_4$ is the binary encoding of the unique index $i$ such that $\omega_{\G} = v_i$, where
$|w_4| = \lceil \log_2 |S| \rceil$ (leading zeros are added to the binary encoding of $i$ to obtain the length
$ \lceil \log_2 |S| \rceil$).
\end{itemize}

\begin{example}
Consider the normal from TSLP $\G$ from Example~\ref{example:TSLPnormalform}.
We have
$w_0=00001$,
$w_1 = 00 11 00 00 11$,
$w_2=1101100000$ and
$w_3=110101001001$.
To compute $w_4$, note first that there are $|S|=180$ words with two occurrences of $a$ and $b$ and one occurrence of $A_3$ and $A_4$.
It follows that $|w_4|=\lceil\log_2(180)\rceil=8$.
Furthermore, with the canonical ordering on $\Sigma=\{a,b\}$, the order of the alphabet is $a,b,A_3,A_4$. The word $\omega_{\G}=aA_3A_4bba$ is the lexicographically largest word in $S$ starting with $aA_3$. There are 132 words in $S$ that are  lexicographically larger than $aA_3A_4bba$, namely
all words in $S$ that start with $b$ (60 words), $A_3$ (30 words), $A_4$ (30 words), or $aA_4$ (12 words).
Hence $\omega_{\G}=aA_3A_4bba$ is the $48^{th}$ word in $S$ in lexicographic order, i.e., 
$\omega_G=v_{47}$ and thus $w_4=00101111$.
\end{example}

The following lemma generalizes a result  from \cite{HuckeL17}:

\begin{lemma}
The set of code words $B(\G)$, where $\G$ ranges over all TSLPs in normal form, is a prefix code.
\end{lemma}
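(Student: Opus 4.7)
The plan is to establish the prefix-code property by exhibiting a left-to-right decoding procedure that, given any bit stream beginning with $B(\G)$, consumes exactly the bits of $B(\G)$ and never needs to look further. Since the number of bits read is then a function only of $B(\G)$ itself, no code word can be a proper prefix of another. This is the standard strategy for showing that a code obtained by concatenating several self-delimiting or length-prefixed segments is prefix-free; the work is to verify that each segment of $B(\G)$ is in turn recoverable from the preceding ones.

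Concretely, the reader would proceed as follows. First, scan for the first $1$ in the stream: this recovers $w_0=0^{m-1}1$ and hence $m=|\G|$. Second, with $m$ known, read the next $2m$ bits as $w_1$; decoding the $2$-bit blocks gives $\type(A_i)$ for every $i$, and in particular the partition $V=V_0\sqcup V_1$. Third, the length of $w_2$ is a function of $m$ alone, so read $w_2$ and recover the block lengths $|u_1|,\dots,|u_{m-1}|$ of the factorization $\rho_\G = A_1u_1A_2u_2\cdots A_{m-1}u_{m-1}$. Fourth, since $\sigma$ is a fixed parameter of the code, the length of $w_3$ is determined, so read $w_3$ and recover the multiplicities $k_1,\dots,k_{m-1}$ of the nonterminals $A_1,\ldots,A_{m-1}$ and $l_1,\ldots,l_\sigma$ of the alphabet symbols in $\rho_\G$. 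This yields the letter frequencies $\eta(A_i)=k_i-1$ and $\eta(a_i)=l_i$ in $\omega_\G$, from which $|S|$ is computable by \eqref{size-S} and hence $|w_4|=\lceil\log_2|S|\rceil$ is known. Fifth, read exactly that many further bits as $w_4$, interpret them as an integer $j$, and decode $\omega_\G = v_j$ by the enumerative coding scheme of \cite{Cover73}.

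Once $w_0w_1w_2w_3w_4$ have been read, the reconstruction of $\G$ is purely deterministic: split $\omega_\G$ into $u_1,\ldots,u_{m-1}$ using the lengths from $w_2$, insert each $A_i$ in front of $u_i$ to rebuild $\rho_\G$, and for every $i$ read off the two letters of $\rho(A_i)$ and combine them with $\type(A_i)$ to produce the right-hand side $r(A_i)$ (type $0$ yields $A_j(\alpha)$, type $1$ yields $A_j(A_k(x))$, types $2$ and $3$ yield $a(\alpha,x)$ and $a(x,\alpha)$ respectively). In particular, nothing in the decoder depends on bits appearing after $w_4$.

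The only point that requires any care—the conceptual heart of the argument—is the ordering of the pieces: each segment must be decodable \emph{using only what has already been read}. The design of $B$ is exactly tailored to this: $w_0$ is self-delimiting, $w_1$ has length determined by $w_0$, $w_2$ by $w_0$, $w_3$ by $w_0$ together with the fixed constant $\sigma$, and $w_4$ by the frequencies extracted from $w_3$. Because every transition point in $B(\G)$ is determined from the prefix up to that point, the total length $|B(\G)|$ is itself a function of $B(\G)$, proving that the family of code words is prefix-free.
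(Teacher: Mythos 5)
Your proof is correct and follows essentially the same route as the paper's: both argue that $m$ is recovered from the self-delimiting $w_0$, that the lengths of $w_1,w_2,w_3$ are then determined (using the fixed $\sigma$), that the frequencies read from $w_3$ yield $|S|$ and hence $|w_4|$, and that the grammar is then reconstructed from $\omega_\G$, $w_2$, and the types. The only difference is that you make explicit the (standard) final step that a left-to-right decoder consuming a prefix-determined number of bits implies prefix-freeness, which the paper leaves implicit.
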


\begin{proof}
Let $B(\G) = w_0 w_1 w_2 w_3 w_4$ with $w_i$ defined as above. We show how to recover the TSLP $\G$, given the alphabet $\Sigma$ and the ordering on $\Sigma$.
From $w_0$ we can determine $m = |V|$ and the factors $w_1$, $w_2$, and $w_3$ of 
$B(\G)$. Hence, we can determine the type of every nonterminal from $w_1$.
The types allow to compute $\G$ from the word $\rho_G$. Hence, it remains to determine $\rho_{\G}$.
To compute $\rho_\G$ from $w_2$, one only needs $\omega_{\G}$. 
For this, one determines the frequencies $\eta(A_1), \ldots, \eta(A_{m-1}), \eta(a_1), \dots, \eta(a_{\sigma})$ 
of the symbols in $\omega_{\G}$ from $w_3$. Using these frequencies one computes the size $|S|$ from \eqref{size-S}
and the length $\lceil \log_2 |S| \rceil$ of $w_4$. From $w_4$, one can finally compute $\omega_{\G}$.
\end{proof}

Note that $|B(\G)| \leq 7|\G|+\sigma + |w_4|$. By using the well-known bound on the code length 
of enumerative encoding \cite[Theorem~11.1.3]{CoTh06}, we get:
  
\begin{lemma} \label{lemma-binary-coding}
For the length of the binary coding $B(\G)$ we have
$$|B(\G)| \leq \mathcal{O}(|\G|) +\sigma+ H(\G).$$  
\end{lemma}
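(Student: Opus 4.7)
The plan is to bound each of the five blocks $w_0, w_1, w_2, w_3, w_4$ making up $B(\G)$ separately, absorb the first four into an $\mathcal{O}(|\G|) + \sigma$ term, and handle the enumerative part $w_4$ via a standard multinomial-coefficient bound to bring in the empirical entropy $H(\G)$.

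First I would read off directly from the definitions in Section~\ref{sec-binary-coding} that $|w_0| = m$, $|w_1| = 2m$, $|w_2| = m$, and $|w_3| = m + \sigma$, where $m = |\G|$. Summing gives $|w_0|+|w_1|+|w_2|+|w_3| = 5m + \sigma$, which is already of the form $\mathcal{O}(|\G|) + \sigma$ as required. The entire content of the lemma therefore reduces to showing $|w_4| \le H(\G) + \mathcal{O}(1)$.

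For this, I would recall that $w_4$ is the enumerative (fixed-length binary) encoding of the index of $\omega_{\G}$ inside the set $S$ of all words over $\Sigma \cup \{A_1,\ldots,A_{m-1}\}$ with the same letter frequencies as $\omega_{\G}$, so $|w_4| = \lceil \log_2 |S|\rceil \le \log_2 |S| + 1$. By \eqref{size-S}, $|S|$ is the multinomial coefficient $\binom{m+1}{\eta(a_1),\ldots,\eta(a_\sigma),\eta(A_1),\ldots,\eta(A_{m-1})}$. The standard bound on multinomial coefficients (Cover–Thomas, Theorem~11.1.3, which is essentially Stirling) states that $\log_2\binom{n}{n_1,\ldots,n_k} \le n\, H(p)$ where $p$ is the distribution $p_i = n_i/n$ and $n = \sum_i n_i$. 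Applying this with $n = m+1 = |\omega_{\G}|$ and the $n_i$ being the frequencies $\eta(\cdot)$, one obtains $\log_2 |S| \le |\omega_{\G}| \cdot H(p_{\omega_{\G}})$, which by definition \eqref{emp-entropy} equals $H(\omega_{\G}) = H(\G)$.

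Putting the two estimates together,
\[
|B(\G)| = |w_0| + |w_1| + |w_2| + |w_3| + |w_4| \le 5|\G| + \sigma + H(\G) + 1 = \mathcal{O}(|\G|) + \sigma + H(\G),
\]
which is the claimed inequality. There is no real obstacle here; the proof is essentially bookkeeping plus a black-box invocation of the enumerative-code bound, and the only subtlety worth flagging is that $|\omega_{\G}| = m+1$ (rather than $m$ or $2m$), as already established just before the definition of $H(\G)$, so that the multinomial coefficient is correctly normalized by the length of $\omega_{\G}$ when converting its logarithm into $H(\omega_{\G})$.
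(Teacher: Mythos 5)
Your proposal is correct and matches the paper's argument: the paper likewise observes $|B(\G)|\le 5|\G|+\sigma+|w_4|$ and then invokes the enumerative-coding bound of Cover--Thomas (Theorem~11.1.3) to get $|w_4|\le H(\omega_{\G})+\mathcal{O}(1)=H(\G)+\mathcal{O}(1)$. You merely spell out the multinomial-coefficient estimate that the paper cites as a black box, so there is nothing to add.
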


\section{Entropy bounds for binary encoded TSLPs} \label{sec-entropy-tslp}

For this section we fix a grammar-based tree compressor
$\psi: t \mapsto \G_t$ such that $\max_{t \in\T_n} |\G_t| \in \mathcal{O}(n / \log_{\hat\sigma} n)$; see Theorem~\ref{thm-compression-ratio}. Let $\gamma>0$ be a concrete constant such that 
\begin{equation} \label{eq-bound-G_t}
|\G_t| \le \frac{\gamma n}{\log_{\hat\sigma} n}
\end{equation}
for every tree $t \in\T_n$ and $n$ large enough.
We allow that the alphabet size $\sigma$ grows with $n$, i.e., $\sigma = \sigma(n)$ is a function in the tree size $n$
such that $1 \leq\sigma(n) \le 2n-1$ (a binary tree $t \in \T_n$ has $2n-1$ nodes).

We then consider the tree encoder $E_\psi : \T \to \{0,1\}^*$ defined
by $E_\psi(t) = B(\G_t)$. 

\begin{lemma} \label{lemma-entropy-bound}
Let $k \geq 0$, $t \in \T_n$ with $n \geq 2$ and let $\mathcal{P} = (P_w)_{w \in \mathcal{L}_k}$ be a $k^{th}$-order
tree process with $\Prob_{\mathcal{P}}(t) > 0$. We have
$$
H(\G_t) \leq  - \log_2 \Prob_{\mathcal{P}}(t) + \mathcal{O}\bigg( \frac{k n \log \hat\sigma}{\log_{\hat\sigma} n} \bigg) + \mathcal{O}\bigg(\frac{n \log\log_{\hat\sigma} n}{\log_{\hat\sigma} n} \bigg) .
$$
\end{lemma}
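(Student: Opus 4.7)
The plan is to apply the Shannon-type inequality \eqref{shannon} to the word $\omega_{\G_t}$ of length $m+1$ (where $m = |\G_t| \in \mathcal{O}(n/\log_{\hat\sigma} n)$), noting that $H(\G_t) = H(\omega_{\G_t})$ by definition. I would design a sub-probability $q$ on the alphabet $\Sigma \cup \{A_1, \ldots, A_{m-1}\}$ that reflects the tree process $\mathcal{P}$: set $q(a) = 1/(2\sigma)$ for every $a \in \Sigma$; set $q(A_i) = \tfrac14\,\Prob_{\mathcal{P}}(\val_{\G_t}(A_i))$ for every $A_i \in V_0$; and set $q(A_i) = C\,\Prob_{\mathcal{P}}(\val_{\G_t}(A_i))/(n_i+1)^3$ for every $A_i \in V_1$, where $n_i = |\val_{\G_t}(A_i)|$ and $C > 0$ is a small absolute constant. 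The normal-form guarantee that the values $\val_{\G_t}(A_i)$ are pairwise distinct, together with Lemma~\ref{lemma-sum-trees} (applied to the $V_0$ part) and Lemma~\ref{lemma-sum-contexts} summed size by size (using the convergence of $\sum_{n\ge 1}(n+1)/(n+1)^3$), ensures $\sum_\alpha q(\alpha) \le 1$.

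Applying \eqref{shannon} then splits the bound on $H(\G_t)$ into four contributions: (i) the terminal part $\sum_{a\in\Sigma}\eta(a)\log_2(2\sigma) \le (m+1)\log_2(2\sigma) \in \mathcal{O}(n\log\hat\sigma/\log_{\hat\sigma} n)$; (ii) the constants $2$ and $-\log_2 C$, totalling $\mathcal{O}(m)$; (iii) the size overhead $3\sum_{A_i\in V_1}\eta(A_i)\log_2(n_i+1)$; and (iv) the principal term $\sum_{i\ge 1}\eta(A_i)\bigl(-\log_2 \Prob_{\mathcal{P}}(\val_{\G_t}(A_i))\bigr)$. For (iii), I would first establish $\sum_i \eta(A_i)(n_i+1) \in \mathcal{O}(n)$ by applying Lemma~\ref{lemma-number-leaves} to a suitably chosen initial subtree of the derivation tree $T_{\G_t}$ whose leaves realise precisely the $\eta$-multiplicities, and then invoke Jensen's inequality on the weighted average to get $\sum_i \eta(A_i)\log_2(n_i+1) = \mathcal{O}(m\log(n/m)) = \mathcal{O}(n\log\log_{\hat\sigma} n/\log_{\hat\sigma} n)$, matching the second error term.

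The crux, and the step I expect to be the main obstacle, is (iv): showing
\[
\sum_{i\ge 1}\eta(A_i)\bigl(-\log_2 \Prob_{\mathcal{P}}(\val_{\G_t}(A_i))\bigr) \;\le\; -\log_2 \Prob_{\mathcal{P}}(t) + \mathcal{O}(k\,m\,\log\hat\sigma).
\]
The guiding intuition is that $t$ is assembled from the values $\val_{\G_t}(A_i)$ via the substitutions dictated by $T_{\G_t}$, and because $\mathcal{P}$ is $k^{th}$-order the functional $-\log_2\Prob_{\mathcal{P}}$ is essentially additive under such substitutions: nodes of depth $\ge k$ inside a substituted tree or context contribute the same log-probability whether viewed in isolation or after plug-in, whereas the boundary nodes of depth $<k$ below a substitution site see a history that differs from the default $\Box 0$-padding. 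Each of the at most $\mathcal{O}(m)$ substitution boundaries can be charged with an $\mathcal{O}(k\log\hat\sigma)$ discrepancy (using that the convention of $\omega_{\G_t}$ removing first occurrences aligns the $\eta(A_i)$-factors with genuine substitution steps in $T_{\G_t}$), yielding the first error term $\mathcal{O}(k n \log\hat\sigma/\log_{\hat\sigma} n)$. A subtlety to handle: if $\Prob_{\mathcal{P}}(\val_{\G_t}(A_i))$ vanishes for some $A_i$ because the default-padded history hits a zero transition, one would replace $\Prob_{\mathcal{P}}$ in the definition of $q(A_i)$ by a uniform mixture over the $\hat\sigma^{2k}$ shifted processes $\mathcal{P}^{(h)}$ (in the spirit of the auxiliary processes $\mathcal{P}'$ and $\mathcal{P}''$ appearing in the proof of Lemma~\ref{lemma-sum-contexts}); this costs only an extra $\mathcal{O}(k\log\hat\sigma)$ per occurrence, absorbed into the same error term. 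Summing the four contributions then yields the stated bound.
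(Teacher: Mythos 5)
Your overall strategy is the paper's: apply the Shannon-type inequality \eqref{shannon} to $\omega_{\G_t}$ with a sub-probability built from $\Prob_{\mathcal{P}}$ evaluated on the (pairwise distinct) values $\val_{\G_t}(A_i)$, verify summability via Lemma~\ref{lemma-sum-trees} and Lemma~\ref{lemma-sum-contexts} with a $1/|s|^3$-type damping for contexts, and control the $\log|s_i|$ overhead by Lemma~\ref{lemma-number-leaves} plus Jensen applied to the leaves of the initial subtree of $T_{\G_t}$ in which every nonterminal is expanded exactly once. Items (i)--(iii) are fine and match the paper step for step.

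The genuine problem is your primary justification of step (iv). Charging each substitution boundary with an $\mathcal{O}(k\log\hat\sigma)$ discrepancy is not a valid argument: for a node $v$ at depth $<k$ inside an embedded occurrence of $s_i$, the quantity $\log_2 P_{z_{\mathrm{true}}}(\lambda(v))-\log_2 P_{z_{\mathrm{padded}}}(\lambda(v))$ is \emph{not} bounded in terms of $k$ and $\sigma$ --- take a first-order process with $P_{\Box 1}(\Box,2)=1/2$ but $P_{\Box 0}(\Box,2)=\epsilon$; then $-\log_2\Prob_{\mathcal{P}}(\val_{\G_t}(A_i))$ exceeds the true contribution of that occurrence by $\log_2(1/\epsilon)$, which is arbitrarily large while $\Prob_{\mathcal{P}}(\val_{\G_t}(A_i))>0$. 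So the inequality in (iv), with $\Prob_{\mathcal{P}}$ evaluated under default $\Box 0$-padding, is simply false in general, not only when some probability vanishes. The fix you relegate to a ``subtlety'' --- replacing $\Prob_{\mathcal{P}}$ by a mixture (or maximum, as the paper does with $\tau(s)=\max_{z\in\mathcal{L}_k}\Prob_{\mathcal{P}_z}(s)$ and the normalization $2^{-(k+1)}\sigma^{-k}$) over all $(2\sigma)^k$ shifted processes --- is in fact the load-bearing device and must be applied to \emph{every} nonterminal: it is exactly what makes the submultiplicativity $\tau(t)\le\prod_i\tau(s_i)$ hold (the true prefix history of each embedded occurrence is one of the candidates in the max) and it is the sole source of the $\mathcal{O}(k\log\hat\sigma)$ cost per occurrence. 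With that substitution made throughout, your proof coincides with the paper's; as written, the main path through (iv) does not close.
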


\begin{proof}
Let $m = |\G_t| = |V|$ be the size of $\G_t$.  
Let $T = T_{\G_t}$ be the derivation tree of $\G_t$. We define an initial subtree $T'$ as follows: If $v_1$ and $v_2$
are non-leaf nodes of $T$ that are labeled with the same nonterminal and $v_1$ comes before $v_2$
in preorder (depth-first left-to-right), then we remove from $T$ all proper descendants of $v_2$. Thus, for every $A_i \in V$ there 
is exactly one non-leaf node in $T'$ that is labeled with $A_i$.  
For the TSLP from Example~\ref{example:TSLPnormalform}, the tree $T'$ is shown in Figure~\ref{fig-derivation-tree} on the right.

Recall the definition of the words $\rho_{\G_t}$ and $\omega_{\G_t}$ from Section~\ref{sec-normal-form}. The word 
$\rho_{\G_t}$ can be obtained by writing down for every node $v$ of $T'$ the labels of $v$'s children and then concatenating these labels.
Moreover, the word $\omega_{\G_t}$ is obtained by writing down (in the right order) the labels of the leaves of $T'$.
Note that $T'$ has  $m$ non-leaf nodes and $m+1$ leaves.
Let $v_1, v_2, \ldots, v_{m+1}$ be the sequence of all leaves of $T'$ (w.l.o.g. in preorder)
and let $\alpha_i \in  \Sigma \cup \{A_1,\ldots,A_{m-1}\}$ be the label of $v_i$.
Let $\overline{\alpha} = (\alpha_1, \alpha_2, \ldots, \alpha_{m+1})$.
Then $\overline{\alpha}$ is a permutation of $\omega_{\G_t}$. We therefore have
$|\omega_{\G_t}|_\alpha = |\overline{\alpha}|_\alpha$ for every $\alpha \in \Sigma \cup \{A_1,\ldots,A_{m-1}\}$.
Hence, $p_{\overline{\alpha}}$ and $p_{\omega_{\G_t}}$ are the same empirical distributions. 
For the TSLP from Example~\ref{example:TSLPnormalform} we get 
$\overline{\alpha} = (a,b,a,b,A_4,A_3)$.
Let $s_i = \val_{\G_t}(\alpha_i) \in \T \cup (\cT \setminus \{x\})$.
Since $\val_{\G_t}(A_i) \neq \val_{\G_t}(A_j)$ for all $i \neq j$ ($\G_t$ is in normal form)
and $\val_{\G_t}(A_i) \notin \Sigma$ for all $i$ (this holds for every
normal form TSLP that produces a tree of size at least two), 
the tuple $\overline{s} = (s_1, s_2, \ldots, s_{m+1})$ satisfies for all $1\le i\le m+1$:
\begin{equation} \label{eq-equality-s_i}
p_{\omega_{\G_t}}(\alpha_i) = p_{\overline{s}}(s_i) .
\end{equation}
\begin{figure}
\centering
\begin{tikzpicture}[-,level distance=9mm]
\tikzset{level 1/.style={sibling distance=23mm}}
\tikzset{level 2/.style={sibling distance=13mm}}
\tikzset{level 3/.style={sibling distance=10mm}}

\node [emptyKnot, draw,circle,fill=gray,text opacity=1,opacity=0.15] (start){$a$}
  child {node [emptyKnot] (a) {$b$}
    child {node [emptyKnot] (c) {$b$}
      child {node[emptyKnot] (b1) {$b$}}
      child {node[emptyKnot] (h) {$a$}}
    }
    child {node[emptyKnot]  (f2) {$a$}}
  }
  child {node [emptyKnot, draw,circle,fill=gray,text opacity=1,opacity=0.15] (b) {$b$}
     child {node[emptyKnot, draw,circle,fill=gray,text opacity=1,opacity=0.15] (f3) {$b$}}
     child {node [emptyKnot, draw,circle,fill=gray,text opacity=1,opacity=0.15] (e) {$a$}}
  }
;
\draw[rounded corners, fill=gray,opacity=0.15]
($(c) + (0,.25)$) -- ($(h) + (.4,-.15)$) -- ($(b1) + (-.4,-.15)$) -- cycle;

\draw[rounded corners, fill=gray,opacity=0.15]
($(a) + (-0.25,0)$) -- ($(a) + (0,.25)$) -- ($(f2) + (0.25,0)$) -- ($(f2) + (0,-.25)$) -- cycle;
\end{tikzpicture}
\caption{The tree $\val(\G)$ of the TSLP from Example~\ref{example:TSLPnormalform}. The canonical occurrences of 
the trees/contexts or inner nodes in $(a, b, a, b, \val_{\G}(A_4),\val_{\G}(A_3)) = (a, b, a, b, b(x,a), b(b,a))$ 
used in the proof of Lemma~\ref{lemma-entropy-bound} are highlighted.}
\label{fig:subtrees}
\end{figure}
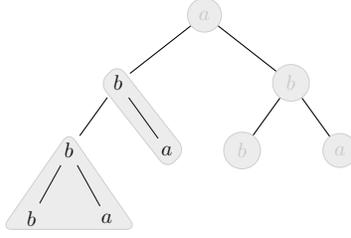
We define from  $\mathcal{P}$ for every $z \in \mathcal{L}_k$ a modified tree process
$\mathcal{P}_z = (P_{z,w})_{w \in \mathcal{L}}$ by setting 
\begin{equation} \label{def-P_z}
P_{z,w}(\tilde{a}) = P_{\ell_{k}(zw)}(\tilde{a})
\end{equation}
for all $\tilde{a} \in \Sigma \times \{0,2\}$. Note that 
the $k^{th}$-order tree process $\mathcal{P}$ is obtained for $z =  (\Box0)^k$ for the fixed padding symbol $\Box \in \Sigma$.
We define a mapping $\tau : \mathcal{T}\cup \mathcal{C} \rightarrow [0,1]$ by 
\begin{equation} \label{def-tau}
\tau(s) = \begin{cases}1 &\text{ if } s \in \mathcal{T}_1 = \Sigma \\
\max_{z \in \mathcal{L}_k} \Prob_{\mathcal{P}_z}(s) &\text{ if } s \in (\T \cup \cT) \setminus \T_1 
\end{cases}
\end{equation}
for every $s \in \T \cup \cT$.
Thus, for every $s \in (\T \cup \cT)\setminus \T_1$, the function $\tau$ maximizes the values of the function $\Prob_{\mathcal{P}}$ associated with the $k^{th}$-order tree process $\mathcal{P} = (P_w)_{w \in \mathcal{L}_k}$ by choosing an optimal $k$-history for the nodes of $s$ whose history is of length smaller than $2k$. 
We show that $\tau$ satisfies
\begin{equation}\label{submul}
\tau(t) \leq \prod_{i=1}^{m+1}\tau(s_i).
\end{equation}
In order to prove \eqref{submul}, first note that by definition of the tree/context $s_u$, for each node $u$ of the derivation tree $T$, the tree/context $s_u$ corresponds to a subtree/subcontext or a single inner node of the binary tree $t$. We define a function $\chi$ which maps a node $u$ of the derivation tree $T$ to a node 
$\chi(u) \in V(t) \subseteq \{0,1\}^*$: Intuitively, $\chi(u)$ is the root of the subtree/subcontext, respectively, the inner node of $t$ which corresponds to $s_u$. Formally, $\chi$ is defined inductively as follows: For the root node $u$ of $T$, we set $\chi(u) = \varepsilon$. Furthermore, let $u$ be a non-leaf node of $T$ which is labeled with the non-terminal $A_i$ and for which $\chi(u)$ has been defined. Let $u_1$ be the left child and $u_2$ be the right 
child of $u$ in $T$. We define $\chi(u_1) = \chi(u)$. The node $\chi(u_2)$ is defined as follows:
\begin{enumerate}[(i)]
\item  If $r(A_i)=A_j(\alpha)$ with $A_j \in V_1$ and $\alpha \in V \cup \Sigma$, then  
we set $\chi(u_2) = \chi(u) \omega(s_{u_1})$ (recall that $\omega(s_{u_1}) \neq \varepsilon$ is 
the position of the parameter $x$ in the context $s_{u_1} = \val_{\G}(A_j)$).
\item  If $r(A_i)=a(\alpha,x)$ (respectively, $r(A_i)=a(x, \alpha)$) for $a \in \Sigma$ and $\alpha \in \Sigma \cup V_0$, then we 
define $\chi(u_2)= \chi(u) 0$ (respectively, $\chi(u_2) = \chi(u)1$).
\end{enumerate}
This yields a well-defined function $\chi$ mapping a node $u$ of $T$ to a node $\chi(u) \in V(t)$. 
Let us define 
$$V_u = \{ \chi(u) v \mid v \in V(s_u) \} \subseteq V(t).$$
Then, the mapping 
\begin{equation} \label{eq-bijection}
V(s_u) \ni v \mapsto \chi(u)v \in V_u
\end{equation}
is bijective. The definition of the sets  $V_u$ implies 
that if two nodes $u$ and $v$ of $T$ are not in an ancestor-descendant relationship, then
$V_u \cap V_v = \emptyset$.
Since the nodes $v_1, \dots, v_{m+1}$ are the leaves of the initial subtree $T'$ and hence not in an ancestor-descendant relationship, the sets $V_i := V_{v_i}$ are disjoint subsets of $V(t)$. 
For the TSLP from Example~\ref{example:TSLPnormalform}, the node sets $V_1, V_2, V_3, V_4, V_5$ and $V_6$ corresponding to the six leaves of the initial subtree depicted in Figure~\ref{fig-derivation-tree} (right) are shown in Figure~\ref{fig:subtrees}.
Note that if $s_i \notin \mathcal{T}_1$ then the bijection from \eqref{eq-bijection} also preserves the 
$\lambda$-mapping in the following sense: 
\begin{equation} \label{eq:preserve-lambda}
 \lambda_t(\chi(v_i)w)=\lambda_{s_i}(w)
\end{equation} 
for every $w \in V(s_i)$. However, if
$s_i \in \mathcal{T}_1$ then this statement can be wrong since the number of children is not preserved in general: 
If $s_i \in \mathcal{T}_1$, then $s_i$ might correspond to a single inner node of $t$. In this case, we have $V_i = \{\chi(v_i)\}$, $V(s_i) = \{\varepsilon\}$ and $\lambda_t(\chi(v_i))=(a,2)$ for some $a \in \Sigma$, but $\lambda_{s_i}(\varepsilon)=(a,0)$. For example, in the TSLP from Example~\ref{example:TSLPnormalform}, the left-most leaf node of its initial subtree depicted in Figure~\ref{fig-derivation-tree}  corresponds to the root node of the tree $\val(\G)$ (see Figure~\ref{fig:subtrees}). We define 
$$\mathcal{I}:=\{i \in \{1, \dots, m+1\} \mid s_i \notin \T_1\}.$$
In our running example, we have
$(s_1, s_2, s_3, s_4, s_5, s_6) = (a, b, a, b, b(x,a), b(b,a))$ and hence $\mathcal{I}:= \{5,6\}$.

The history $h(\chi(v_i)w)$ of a node $\chi(v_i)w \in V_i$ with $w \in V(s_i)$ in the tree $t$ 
is the concatenation of the history $h(\chi(v_i))$ of $\chi(v_i)$ in $t$ and the history $h(w)$ of $w$ in the tree/context $s_i$.
Thus, if $i \in \mathcal{I}$, we have
\begin{align}
  \label{submul2}
  \begin{split}
  \max_{z \in \mathcal{L}_k}\prod_{v \in V_i}P_{\ell_{k}(zh(v))}(\lambda_t(v)) \ = \ & \max_{z \in \mathcal{L}_k}\prod_{w \in V(s_i)}P_{\ell_{k}(zh(\chi(v_i))h(w))}(\lambda_{t}(\chi(v_i)w))\\
   \  \stackrel{\text{\eqref{eq:preserve-lambda}}}{=} \  & \max_{z \in \mathcal{L}_k}\prod_{w \in V(s_i)}P_{\ell_{k}(zh(\chi(v_i))h(w))}(\lambda_{s_i}(w))\\
 \ \leq \ & \max_{z \in \mathcal{L}_k}\prod_{w \in V(s_i)}P_{\ell_{k}(zh(w))}(\lambda_{s_i}(w)).
\end{split}
\end{align}
For the inequality in the last line, note that every $k$-history $\ell_{k}(zh(\chi(v_i))h(w))$ for $z \in \mathcal{L}_k$ is also of the form
$\ell_{k}(z'h(w))$ for some $z' \in \mathcal{L}_k$.

We can now show \eqref{submul}. Since $t \in \T_n$ with $n \geq 2$ we have
\begin{eqnarray*} \allowdisplaybreaks
\tau(t) & \stackrel{\text{\eqref{def-tau}}}{=} & \max_{z \in \mathcal{L}_k}\Prob_{\mathcal{P}_z}(t) \\
& \stackrel{\text{\eqref{def-P_z}}}{=} & \max_{z \in \mathcal{L}_k}\prod_{v \in V(t)}P_{\ell_{k}(zh(v))}(\lambda_t(v)) \\
&\leq & \max_{z \in \mathcal{L}_k}\prod_{i \in \mathcal{I}}\prod_{v \in V_i}P_{\ell_{k}(zh(v))}(\lambda_t(v))   \quad \text{(since $P_{\ell_{k}(zh(v))}(\lambda(v)) \leq 1$ for $v \in V(t)$)}\\
&\leq & \prod_{i \in \mathcal{I}}\max_{z \in \mathcal{L}_k}\prod_{v \in V_i}P_{\ell_{k}(zh(v))}(\lambda_t(v)) \\
& \stackrel{\text{\eqref{submul2}}}{\le} & \prod_{i \in \mathcal{I}} \max_{z \in \mathcal{L}_k}\prod_{w \in V(s_i)}P_{\ell_{k}(zh(w))}(\lambda_{s_i}(w)) \\
&= & \prod_{i=1}^{m+1} \tau(s_i)  \quad 
\text{(since $\tau(s_i)=1$ for $i \notin \mathcal{I}$)} .
\end{eqnarray*}
Next, we define the function $\xi: \T \cup \cT \setminus \{x\} \rightarrow [0,1]$ as follows:
\begin{align*}
\xi(s) = \begin{cases}  
2^{-(k+2)}\sigma^{-(k+1)}\tau(s) \quad &\text{ if } s \in \T \\
\frac{6}{\pi^2}2^{-(k+1)}\sigma^{-k}\frac{\tau(s)}{|s|^2(|s|+1)} \quad &\text{ if } s \in \cT\setminus \{x\}.
\end{cases}
\end{align*}
We get
\begin{eqnarray*} \allowdisplaybreaks
\sum_{s \in \T \cup \cT \setminus \{x\}}\xi(s) 
&=& 2^{-(k+2)}\sigma^{-(k+1)}\sum_{s \in \T}\tau(s) + 
            \frac{6}{\pi^2}2^{-(k+1)}\sigma^{-k}\sum_{s \in \mathcal{C}\setminus \{x\}}\frac{\tau(s)}{|s|^2(|s|+1)}\\
&=& 2^{-(k+2)}\sigma^{-(k+1)}\bigg(\sum_{s \in \T \setminus \T_1} \max_{z \in \mathcal{L}_k} \Prob_{\mathcal{P}_z}(s) 
+ \sum_{s \in \T_1 }1 \bigg) +\\
& & \frac{6}{\pi^2}2^{-(k+1)}\sigma^{-k}\sum_{r \geq 1}\frac{1}{r^2(r+1)}\sum_{s \in \cT_r}
\max_{z \in \mathcal{L}_k} \Prob_{\mathcal{P}_z}(s)\\
&\leq & 2^{-(k+2)}\sigma^{-(k+1)}\bigg(\sum_{z \in \mathcal{L}_k}\sum_{s \in \mathcal{T}\setminus \mathcal{T}_1}  \Prob_{\mathcal{P}_z}(s) +\sigma \bigg)
+\\
& & \frac{6}{\pi^2}2^{-(k+1)}\sigma^{-k}\sum_{z \in \mathcal{L}_k}\sum_{r \geq 1}\frac{1}{r^2(r+1)}\sum_{s \in \mathcal{C}_r} \Prob_{\mathcal{P}_z}(s) \\
&\stackrel{\text{($\ast$)}}{\leq} & 2^{-(k+2)}\sigma^{-(k+1)}\left(2^{k}\sigma^k + \sigma \right)+ \frac{6}{\pi^2}2^{-(k+1)}\sigma^{-k} 2^{k}\sigma^k \sum_{r \geq 1}\frac{1}{r^2} \\
& \stackrel{\text{($\ast\ast$)}}{=} & 2^{-2}\sigma^{-1}+ 2^{-(k+2)} \sigma^{-k}+1/2 \\
& \leq  & 1,
\end{eqnarray*}
where ($\ast$) follows from Lemmas~\ref{lemma-sum-trees} and \ref{lemma-sum-contexts} and
$|\mathcal{L}_k| = 2^k\sigma^k$ and ($\ast\ast$) follows from the well-known fact that $\sum_{r \geq 1} r^{-2} = \pi^2/6$.
In particular, we have $\sum_{s \in \{s_1, \ldots, s_{m+1}\}} \xi(s) \leq 1.$
Thus, with Shannon's inequality \eqref{shannon}, we obtain:
\begin{equation*}
H(\mathcal{G}_t)=H(\omega_{\mathcal{G}_t})=\sum_{i=1}^{m+1} -\log_2 p_{\omega_{\G_t}}(\alpha_i) 
\stackrel{\text{\eqref{eq-equality-s_i}}}{=}
  \sum_{i=1}^{m+1}  -\log_2 p_{\overline{s}}(s_i) \leq  \sum_{i=1}^{m+1}  -\log_2 \xi(s_i) .
\end{equation*}
With $\mathcal{I}_0= \{i \mid 1 \le i \le m+1, s_i \in \mathcal{T}\}$ and  $\mathcal{I}_1 = \{ i \mid 1 \le i \le m+1, s_i \in \mathcal{C}\}$ we obtain
\begin{eqnarray*}
H(\mathcal{G}_t) &\leq& \sum_{i \in \mathcal{I}_0}-\log_2 \xi(s_i) + \sum_{i \in \mathcal{I}_1}-\log_2 \xi(s_i) \\
&=&\sum_{i \in \mathcal{I}_0}-\log_2\left( 2^{-(k+2)}\sigma^{-(k+1)}\tau(s_i) \right) + \\
&& \sum_{i \in \mathcal{I}_1}-\log_2\left(  \frac{6}{\pi^2}2^{-(k+1)}\sigma^{-k}\frac{\tau(s_i)}{|s_i|^2(|s_i|+1)} \right) 
\end{eqnarray*}
by definition of $\xi.$ Using logarithmic identities, we get
\begin{eqnarray*}
H(\mathcal{G}_t) &\leq & |\mathcal{I}_0|(k+2)
+ |\mathcal{I}_0|(k+1)\log_2\sigma
-\log_2\left(\prod_{i \in \mathcal{I}_0}\tau(s_i)\right) +\\
& &  \log_2\left(\frac{\pi^2}{6}\right)|\mathcal{I}_1|
+|\mathcal{I}_1|(k+1) +|\mathcal{I}_1|k\log_2\sigma -\log_2\left(\prod_{i \in \mathcal{I}_1}\tau(s_i)\right)
+ \\
& & \sum_{i \in \mathcal{I}_1}\log_2 |s_i|^2(|s_i|+1).
\end{eqnarray*}
Using $|\mathcal{I}_0|+|\mathcal{I}_1|=m+1 \leq 2m = 2|\mathcal{G}_t|$,  $\log_2(\pi^2/6)|\mathcal{I}_1|\leq |\mathcal{I}_1|$ and $|s_i|+1\leq 2|s_i|$, we obtain
\begin{equation*}
H(\mathcal{G}_t) \leq 2(k+2)|\mathcal{G}_t| 
+2(k+1)|\mathcal{G}_t|\log_2\sigma
-\log_2\left(\prod_{i=1}^{m+1}\tau(s_i)\right)+\sum_{i =1}^{m+1}\log_2 2|s_i|^3.
\end{equation*}
Equation~\eqref{submul} and  $\tau(t) \geq \Prob_{\mathcal{P}}(t)$ yield
\begin{eqnarray*}
H(\mathcal{G}_t)&\leq & 2(k+3)|\mathcal{G}_t| +2(k+1)|\mathcal{G}_t|\log_2\sigma -\log_2\tau(t) +3\sum_{i =1}^{m+1}\log_2|s_i| \\
&\le & -\log_2\Prob_{\mathcal{P}}(t)  + \mathcal{O}(k |\mathcal{G}_t| \log \hat\sigma + \sum_{i =1}^{m+1}\log_2|s_i|).
\end{eqnarray*}
Let us bound the sum $\sum_{i =1}^{m+1}\log_2|s_i|$: 
Using Jensen's inequality and Lemma~\ref{lemma-number-leaves}  (which yields
$\sum_{i=1}^{m+1} |s_i| \leq 2n$), we get
\begin{eqnarray*}
\sum_{i =1}^{m+1}\log_2|s_i| & \le & (m+1)\log_2\left(\sum_{i =1}^{m+1} \frac{|s_i|}{m+1} \right) \\
& \le & (m+1)\log_2\left( \frac{2n}{m+1} \right) \\
&\le & 2 |\mathcal{G}_t| \log_2\left( \frac{2n}{|\mathcal{G}_t|} \right)
\end{eqnarray*} and thus
\begin{equation} \label{entropy-bound-intermediate}
H(\mathcal{G}_t) \le -\log_2\Prob_{\mathcal{P}}(t)  + \mathcal{O}\bigg(k |\mathcal{G}_t| \log \hat\sigma + |\mathcal{G}_t| \log_2\left( \frac{n}{|\mathcal{G}_t|} \right)\bigg) .
\end{equation}
To bound the term $|\mathcal{G}_t| \log_2(n/|\mathcal{G}_t|)$ recall that for $n$ large enough we have
$|\mathcal{G}_t| \le \gamma \cdot n/\log_{\hat\sigma} n = \gamma \cdot n \cdot \log \hat\sigma / \log n$ 
by \eqref{eq-bound-G_t}. Here, $\gamma$ is a constant. Since 
$\sigma \le 2n-1$ there is a constant $\gamma' \geq 1$ with $\gamma \cdot n/\log_{\hat\sigma} n \leq \gamma' n$.
Since  for every fixed $z \geq 1$,
the function $\phi(x) = x\log_2\left(\frac{z}{x}\right)$ is monotonically increasing for $0 < x \le \frac{z}{e}$ (where $e$ is Euler's number), we get
\begin{align*}
|\mathcal{G}_t| \log_2\bigg(\frac{n}{|\mathcal{G}_t|}\bigg)  \le 
|\mathcal{G}_t| \log_2\bigg(\frac{e \gamma' n}{|\mathcal{G}_t|}\bigg) \le
\frac{\gamma n \log_2\big(\frac{e \gamma'}{\gamma} \log_{\hat\sigma} n\big)}{\log_{\hat\sigma} n} 
\le
\mathcal{O}\bigg(\frac{n \log\log_{\hat\sigma} n}{\log_{\hat\sigma} n}\bigg).
\end{align*}
With \eqref{entropy-bound-intermediate} we get
\begin{align*}
H(\G_t) \leq  - \log_2 \Prob_{\mathcal{P}}(t) + \mathcal{O}\bigg( \frac{k n \log \hat\sigma}{\log_{\hat\sigma} n} \bigg) + \mathcal{O}\bigg(\frac{n \log\log_{\hat\sigma} n}{\log_{\hat\sigma} n} \bigg),
\end{align*}
which proves the lemma.
\end{proof}

\begin{theorem} \label{thm-red-bound}
For every $t \in \T_n$ and every $k \geq 0$ we have
$$
|E_\psi(t)| \le H_k(t) + \mathcal{O}\bigg( \frac{k n \log \hat\sigma}{\log_{\hat\sigma} n} \bigg) + \mathcal{O}\bigg(\frac{n \log\log_{\hat\sigma} n}{\log_{\hat\sigma} n} \bigg)+\sigma .
$$
\end{theorem}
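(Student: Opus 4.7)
The plan is to chain together three results established earlier in the paper: the size bound on the binary encoding from Lemma~\ref{lemma-binary-coding}, the bound on $H(\G_t)$ from Lemma~\ref{lemma-entropy-bound}, and the identity relating $-\log_2 \Prob_{\mathcal{P}}(t)$ to $H_k(t)$ from Theorem~\ref{theo-travis-for-trees}. The last ingredient indicates that the right tree process to plug into Lemma~\ref{lemma-entropy-bound} is the empirical $k^{th}$-order tree process $\mathcal{P}^t = (P^t_z)_{z \in \mathcal{L}_k}$ associated with $t$.

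Concretely, I would start from
\[
|E_\psi(t)| \;=\; |B(\G_t)| \;\le\; \mathcal{O}(|\G_t|) + \sigma + H(\G_t),
\]
which is just Lemma~\ref{lemma-binary-coding}. Next I instantiate Lemma~\ref{lemma-entropy-bound} with $\mathcal{P} = \mathcal{P}^t$. The hypothesis $\Prob_{\mathcal{P}^t}(t) > 0$ is automatic: for every $v \in V(t)$ the empirical probability $P^t_{h_k(v)}(\lambda(v)) = m^t_{h_k(v),\lambda(v)}/m^t_{h_k(v)}$ is strictly positive by construction, since $v$ itself is a node counted in both numerator and denominator. By Theorem~\ref{theo-travis-for-trees} applied to $\mathcal{P}^t$, the inequality becomes an equality, yielding $-\log_2 \Prob_{\mathcal{P}^t}(t) = H_k(t)$. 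Substituting into Lemma~\ref{lemma-entropy-bound} gives
\[
H(\G_t) \;\le\; H_k(t) + \mathcal{O}\!\left(\frac{k n \log \hat\sigma}{\log_{\hat\sigma} n}\right) + \mathcal{O}\!\left(\frac{n \log\log_{\hat\sigma} n}{\log_{\hat\sigma} n}\right).
\]

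The remaining task is to absorb the $\mathcal{O}(|\G_t|)$ term. Here I would invoke Theorem~\ref{thm-compression-ratio}, giving $|\G_t| \le \mathcal{O}(n/\log_{\hat\sigma} n)$, which is dominated by the $\mathcal{O}(n\log\log_{\hat\sigma} n / \log_{\hat\sigma} n)$ error term for $n$ large enough. Combining everything yields the claimed bound. The degenerate case $n = 1$, where Lemma~\ref{lemma-entropy-bound} is not applicable, is trivial: then $H_k(t) = 0$, $|\G_t| = 1$, and $|E_\psi(t)|$ is bounded by a constant (depending on $\sigma$), so the additive $+\sigma$ term absorbs it; similarly, small $n$ for which the asymptotic estimates fail are handled by adjusting the hidden constants in the $\mathcal{O}$-notation.

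I do not anticipate a significant obstacle here: all of the delicate combinatorics (the construction of the initial subtree $T'$, the factorization estimate \eqref{submul}, and the Kraft-style normalization using $6/(\pi^2 r^2(r+1))$ over contexts) was already carried out inside Lemma~\ref{lemma-entropy-bound}. The theorem is essentially a packaging step that picks the empirical tree process to align Lemma~\ref{lemma-entropy-bound} with Theorem~\ref{theo-travis-for-trees} and then tacks on the encoding overhead from Lemma~\ref{lemma-binary-coding}.
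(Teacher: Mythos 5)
Your proposal is correct and follows essentially the same route as the paper: apply Lemma~\ref{lemma-binary-coding}, instantiate Lemma~\ref{lemma-entropy-bound} with the empirical $k^{th}$-order tree process $\mathcal{P}^t$, use the equality case of Theorem~\ref{theo-travis-for-trees} to replace $-\log_2\Prob_{\mathcal{P}^t}(t)$ by $H_k(t)$, and absorb $\mathcal{O}(|\G_t|) = \mathcal{O}(n/\log_{\hat\sigma} n)$ into the error terms. Your explicit checks that $\Prob_{\mathcal{P}^t}(t)>0$ and that small $n$ is degenerate are details the paper leaves implicit, but they do not change the argument.
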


\begin{proof}
Let $\mathcal{P} = (P_w)_{w \in \{0,1\}^k}$ be a $k^{th}$-order
tree process with $P(t) > 0$.
Lemmas ~\ref{lemma-binary-coding} and \ref{lemma-entropy-bound} yield
\begin{eqnarray*}
|E_\psi(t)| &\leq& \mathcal{O}(|\G_t|) + H(\G_t) +\sigma\\
& \leq & \mathcal{O}(|\G_t|) - \log_2 \Prob_{\mathcal{P}}(t) + \mathcal{O}\bigg( \frac{k n \log \hat\sigma}{\log_{\hat\sigma} n} \bigg) + \mathcal{O}\bigg(\frac{n \log\log_{\hat\sigma} n}{\log_{\hat\sigma} n} \bigg) +\sigma\\
& = & - \log_2 \Prob_{\mathcal{P}}(t) + \mathcal{O}\bigg( \frac{k n \log \hat\sigma}{\log_{\hat\sigma} n} \bigg) + \mathcal{O}\bigg(\frac{n \log\log_{\hat\sigma} n}{\log_{\hat\sigma} n} \bigg)+\sigma,
\end{eqnarray*}
where the last equality uses the bound $|\G_t| \in \mathcal{O}(n / \log_{\hat\sigma} n)$.
Finally, by taking for $\mathcal{P}$ be the empirical $k^{th}$-order tree process  $\mathcal{P}^t$,
we get 
$$
|E_\psi(t)| \leq H_k(t) + \mathcal{O}\bigg( \frac{k n \log \hat\sigma}{\log_{\hat\sigma} n} \bigg) + \mathcal{O}\bigg(\frac{n \log\log_{\hat\sigma} n}{\log_{\hat\sigma} n} \bigg)
+\sigma
$$
from Theorem~\ref{theo-travis-for-trees}.
\end{proof}

\section{Extension to unranked trees}\label{extension-unranked}
So far, we have only considered binary trees.
In this section, we consider unranked, ordered trees, where the number
of  children of a node (also called its degree) can be any natural number and the children of every node are totally ordered. As before, each node is labeled by an element of some finite alphabet $\Sigma$.
Let us denote by $\mathcal{U}(\Sigma)$ (or simply $\mathcal{U}$) the set of all such trees. For technical reasons
we also define {\em forests} which are ordered sequences of trees from $\mathcal{U}$.
The set of forests is denoted with $\mathcal{F}$. The sets $\mathcal{U}$
and $\mathcal{F}$ can be inductively defined as the smallest sets of strings over the alphabet $\Sigma \cup \{ (, )\}$
such that the following conditions hold:
\begin{itemize}
\item $\varepsilon \in \mathcal{F}$ (this is the empty forest),
\item if $a \in \Sigma$ and $f \in \mathcal{F}$ then $a(f) \in \mathcal{U}$,
\item if $t \in \mathcal{U}$ and $f \in \mathcal{F}$ then $tf \in \mathcal{F}$.
\end{itemize}
The singleton tree $a()$ (which is obtained by taking $f = \varepsilon$ in the second point)
is usually written as $a$. Note that $\mathcal{U} \subseteq \mathcal{F}$ and that $\mathcal{F} = \mathcal{U}^*$. 
The size $|f|$ of $f \in \mathcal{F}$ is the number of occurrences of $\Sigma$-labels in $f$; formally: 
$|\varepsilon| = 0$, $|a(f)| = 1 + |f|$ and $|tf| = |t|+|f|$ for $a \in \Sigma$, $t \in \mathcal{U}$, and $f \in \mathcal{F}$.

The {\em first-child/next-sibling encoding} transforms a forest $f \in \mathcal{F}$
 into a binary tree $\ffcns(f) \in \T$. It is defined inductively as follows (recall that $\Box \in \Sigma$ is 
 a fixed distinguished symbol in $\Sigma$):
 \begin{itemize}
\item $\ffcns(\varepsilon) = \Box$ and
\item $\ffcns(a(f)g)  = a(\ffcns(f), \ffcns(g))$ for
$f, g \in \mathcal{F}$ and $a \in \Sigma$.
\end{itemize}
Thus, the left (resp., right) child of a node in $\ffcns(f)$ is the first child (resp., right sibling)
of the node in $f$ or a $\Box$-labeled leaf if it does not exist.
\begin{example}
\label{example:fcns}
If $f = a(bc)d(e)$ then
\begin{equation*}
\begin{split}
\ffcns(f) &= \ffcns(a(bc)d(e))  =  a(\ffcns(bc), \ffcns(d(e)) \\
&= a(b(\Box,\ffcns(c)), d(\ffcns(e), \Box)) =  a(b(\Box, c(\Box,\Box)),  d(e(\Box,\Box),\Box)) ,
\end{split}
\end{equation*}
see also Figure~\ref{fig-fcns}.

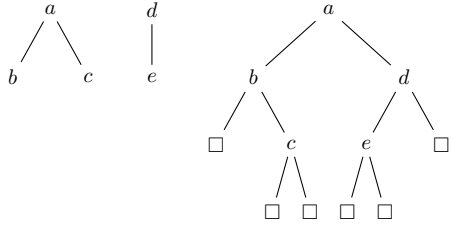
\begin{figure}
\centering
\begin{tikzpicture}[-,level distance=9mm]
\tikzset{level 1/.style={sibling distance=10mm}}
\tikzset{level 2/.style={sibling distance=8mm}}

\node [emptyKnot] (a){$a$}
  child {node [emptyKnot] (b) {$b$}}
  child {node [emptyKnot] (c) {$c$}} ;

\node [emptyKnot, right = 1cm of a] (d){$d$}
  child {node [emptyKnot] (e) {$e$}} ;
  
\tikzset{level 1/.style={sibling distance=20mm}}
\tikzset{level 2/.style={sibling distance=10mm}}
\tikzset{level 3/.style={sibling distance=5mm}}
  
\node [emptyKnot, right = 2cm of d] (a'){$a$}
  child {node [emptyKnot] (b') {$b$}
      child {node [emptyKnot] (box1) {$\Box$}}
      child {node [emptyKnot] (c') {$c$}
        child {node [emptyKnot] (box2) {$\Box$}}
        child {node [emptyKnot] (box3) {$\Box$}}}}
  child {node [emptyKnot] (d') {$d$}
   child {node [emptyKnot] (e') {$e$}
        child {node [emptyKnot] (box5) {$\Box$}}
        child {node [emptyKnot] (box6) {$\Box$}}}
   child {node [emptyKnot] (box4) {$\Box$}}}
   ;
\end{tikzpicture}
\caption{\label{fig-fcns}Forest $f$ on the left and $\ffcns(f)$ on the right from Example~\ref{example:fcns}.}
\end{figure}
\end{example}
Note that if $t \in \mathcal{U}$, $|t| = n$ then $\ffcns(t)$ is a binary tree with $n$ internal nodes. Hence
we have $|\ffcns(t)| = n+1$ (which is the number of leaves of $\ffcns(t)$). 
We define the $k^{th}$-order empirical entropy of an unranked tree $t \in \mathcal{U}$ as 
$H_k(t)=H_k(\ffcns(t))$. Note that this definition is independent of the choice of the symbol $\Box \in \Sigma$. From Theorem \ref{thm-red-bound}, we immediately obtain:
\begin{theorem}
For every $t \in \mathcal{U}$ with $|t| = n$ and every $k \geq 0$ we have
$$
|E_\psi(\ffcns(t))| \le H_k(t) + \mathcal{O}\bigg( \frac{k n \log \hat\sigma}{\log_{\hat\sigma} n} \bigg) + \mathcal{O}\bigg(\frac{n \log\log_{\hat\sigma} n}{\log_{\hat\sigma} n} \bigg)+\sigma .
$$
\end{theorem}
The above definition of the $k^{th}$-order empirical entropy of an unranked tree can be also applied
to binary trees $t$ (a binary tree can be viewed as a particular unranked tree).
This yields $H_k(\ffcns(t))$ and leads to the question how this value relates to $H_k(t)$ (the 
$k^{th}$-order empirical entropy of $t$ as defined before in \eqref{def:H_k}).
In one direction, we have the following bound:

\begin{lemma} \label{lemma-H_2k}
Let $t \in \mathcal{T}(\Sigma)$ denote a binary tree with first-child next-sibling encoding $\ffcns(t) \in \mathcal{T}(\Sigma)$. Then $H_{2k}(\ffcns(t))\leq H_{k-1}(t)$ for $1 \leq k \leq |t|$.  
\end{lemma}
The somewhat technical proof of Lemma~\ref{lemma-H_2k} can be found in Appendix~\ref{appendix-H_2k}.
In contrast to Lemma~\ref{lemma-H_2k}, there are families of binary trees $t_n$ where $H_k(\ffcns(t_n))$ is exponentially smaller than $H_k(t_n)$ for every $n \geq 1$ and $k \geq 2$: 
Define $t_n$ inductively by $t_1 = a$ and $t_n = a(c, t_{n-1})$ if $n$ is even and $t_n = a(b,t_{n-1})$ if $n$ is odd. Thus, $t_n$ denotes a right-degenerate binary tree of size $n$, whose inner nodes and right-most leaf are labeled with $a$ and whose leaves except for the right-most leaf are alternately labeled $b$ and $c$. We get 
$H_k(t_n) \in \Theta(n-k)$: there are $n-k$ many nodes $v$ with $k$-history $(a1)^{k-1}a0$, and about half of them are $b$-labeled leaves, while the other half are
$c$-labeled leaves.
Moreover, we have $H_k(\ffcns(t_n)) \in \Theta(\log(n-k))$: the fcns-encodings of the binary trees $t_n$ can be inductively defined by $\ffcns(t_1) = a(\Box, \Box)$ and $\ffcns(t_n) = a(c(\Box,\ffcns(t_{n-1})),\Box)$ if $n$ is even and $\ffcns(t_n) = a(b(\Box,\ffcns(t_{n-1})),\Box)$ if $n$ is odd. Intuitively, as the labels $b$ and $c$ are thus incorporated in $k$-histories of nodes of $\ffcns(t_ n)$, we can thus determine the label of a node from its $k$-history for $k \geq 2$ for most nodes of $\ffcns(t_n)$.

Our definition of the $k^{th}$-order empirical entropy of an unranked tree via the fcns-encoding
has a practical motivation. Unranked trees occur for instance in the context of XML, where the hierarchical
structure of a document is represented as an unranked node labeled tree. In this setting, the label of a node
quite often depends on (i) the labels of the ancestor nodes and (ii) the labels of the (left) siblings. This dependence 
is captured by our definition of the  $k^{th}$-order empirical entropy.

We also confirmed this intuition by experimental data (shown in Table~\ref{table}) with real XML document trees
(ignoring textual data at the leaves) showing that in these cases the  
$k^{th}$-order empirical entropy is indeed very small compared to the worst-case bit size.
More precisely, we computed for 21 real XML document trees\footnote{All data are available from \url{http://xmlcompbench.sourceforge.net/Dataset.html}.}
the $k^{th}$-order empirical entropy (for $k=1,2,4,8$) and divided the value by the worst-case bit length
$2n + \log_2(\sigma) n$, where $n$ is the number of nodes and $\sigma$ is the number of node labels \cite{GearyRR06}.

Our experimental results combined with our entropy bound~\eqref{entropy-bound} for grammar-based
compression are in accordance with the fact that grammar-based tree compressors yield impressive compression ratios
for XML document trees, see e.g. \cite{LohreyMM13}. Some of the XML documents from our experiments were 
also used in \cite{LohreyMM13}, where the performance of the grammar-based tree compressor TreeRePair was tested.
An interesting observation is that those XML trees, for which 
our $k$-th order empirical entropy is large are indeed those XML trees with the worst
compression ratio for TreeRePair in \cite{LohreyMM13}. This is in particular true for the Treebank document, see 
Table~\ref{table}. TreeRePair obtained for Treebank a compression ratio of around 20\%, whereas for all other documents
tested in \cite{LohreyMM13} TreeRePair achieved a compression ratio below 8\%.

\begin{table}[h!]
\centering\scriptsize
  %\thisfloatpagestyle{empty}
  %\hspace{-0.4cm}
    \begin{tabular}{lrrrrrrr}
    XML document            & $n$ & $\sigma$  & $w := (2 +\log_2\sigma)n$  & $H_1/w$  & $H_2/w$ & $H_4/w$  & $H_8/w$   \\
    \midrule
    \fname{Baseball}      &28\ 306&46&212\ 961.9447&2.9818 \%&1.2547 \%&0.6739 \%&0.6662 \% \\
    \fname{DBLP}           &3\ 332\ 130&35&23\ 755\ 697.8193&10.9775 \%&8.7407 \%&8.2134 \%&6.7270 \% \\
    \fname{DCSD-Normal}&2\ 242\ 699&50&17\ 142\ 868.6330&4.2437 \%&2.2481 \%&1.7517 \%&1.3038 \% \\
    \fname{EnWikiNew}         &404\ 652&20&2\ 558\ 180.8475&9.5317 \%&3.0760 \%&3.0759 \%&2.9378 \%\\
    \fname{EnWikiQuote}      &262\ 955&20&1\ 662\ 382.6021&9.4270 \%&3.1014 \%&3.1014 \%&3.1006 \%\\  
    \fname{EnWikiVersity}      &495\ 839&20&3\ 134\ 658.5046&8.8952 \%&2.3753 \%&2.3753 \%&2.3750 \% \\
      \fname{EXI-Array}           &226\ 523&47&1\ 711\ 288.1304&0.2506 \%&0.2495 \%&0.2492 \%  &0.2483 \%  \\            
      \fname{EXI-factbook}      &55\ 453&199&534\ 379.7451&2.2034 \%&0.9450 \%&0.8132 \%&0.8092 \% \\
     \fname{EXI-Invoice}        &15\ 075&52  &116\ 084.1288&0.0484 \%&0.0268 \%&0.0139 \%&0.0098 \% \\
    \fname{EXI-Telecomp}&177\ 634&39 & 1\ 294\ 135.1377&1.5405 \% &0.0044 \% &0.0034 \%&0.0021 \%  \\   
    \fname{EXI-weblog}        &93\ 435 &12 &521\ 830.9713&0.0032 \%&0.0028 \% &0.0028 \%&0.0028 \% \\    
    \fname{Lineitem}         &1\ 022\ 976&18&6\ 311\ 685.1983&0.0003 \%&0.0003 \%&0.0003 \%&0.0003 \%\\
    \fname{Mondial}           &22\ 423&23&146\ 277.8297&11.1285 \%&9.2940 \%&8.4702 \%&7.7679 \%\\
    \fname{NASA}               &476\ 646&61&3\ 780\ 154.2290&7.7424 \%&4.4588 \%&3.8898 \%&3.8054 \%\\   
    \fname{Shakespeare}       &179\ 690&22&1\ 160\ 695.2676&11.9140 \%&10.8416 \%&10.6368 \%&10.4765 \%\\
    \fname{SwissProt}   &2\ 977\ 031&85&25\ 035\ 017.5080&12.1892 \%&10.5249 \%&9.2455 \%&8.1204 \%\\   
    \fname{TCSD-Normal}   &2\ 749\ 751&24&18\ 107\ 007.2213&8.5450 \%&8.4004 \%&8.2862 \%&8.2472 \% \\   
    \fname{Treebank}     &2\ 437\ 666&250&24\ 293\ 253.5140 &30.8912 \%&23.0825 \%&19.2444 \%&13.4058 \%\\
    \fname{USHouse}     &6\ 712&43&49\ 845.0890&21.0500 \%&18.2164 \%&12.6572 \%&9.3754 \%\\   
     \fname{XMark1}          &167\ 865&74&1\ 378\ 079.8892&12.1610 \%&9.5101 \%&9.2271 \%&8.4281 \% \\  
   \fname{XMark2}         &1\ 666\ 315&74&13\ 679\ 535.2849&12.2125 \%&9.5634 \%&9.3259 \%&8.9400 \% \\   \\
    \end{tabular}
  \caption{Experimental results for XML tree structures, where $n$ denotes the number of nodes and $\sigma$ denotes
  the number of node labels.} \label{table}
\end{table}

\section{String straight-line programs versus higher-order empirical entropy of strings} \label{sec-string-SLP}

Our definition of $k^{th}$-order empirical entropy does not capture all regularities that can be exploited in grammar-based
compression. Take for instance a complete unlabeled binary tree $t_n$ of height $n$ (all paths from the root to a leaf have length $n$).
This tree has $2^n$ leaves and is very well compressible: its minimal DAG has only $n+1$ nodes, hence there also exists
a TSLP of size $n+1$ for $t_n$. But for every fixed $k$ the $k^{th}$-order empirical entropy of $t_n$ divided by $n$ converges to $2$ (the trivial
upper bound) for $n \to \infty$. If $n \gg k$ then for every $k$-history $z$ the number of leaves with $k$-history 
$z$ is roughly the same as the number of internal nodes with $k$-history $z$. Hence, although $t_n$ is highly compressible with TSLPs
(and even DAGs), its $k^{th}$-order empirical entropy is close to the maximal value. 
We show in the following that the same phenomenon occurs for grammar-based string compression and the well-established empirical entropy of strings.

The $k^{th}$-order empirical entropy of a string is defined as follows (see e.g.~\cite{Gagie06a}). Let $\Sigma$ denote a finite alphabet and let $w \in \Sigma^*$. 
For a non-empty string $\alpha \in \Sigma^+,$ define $w(\alpha) \in \Sigma^*$ as the string whose $i^{th}$ symbol is the symbol in $w$ immediately following the $i^{th}$ occurrence of the string $\alpha$ in $w$. Thus,  if $\alpha$ is not a suffix of $w$, the length of $w(\alpha)$ is equal to the number of occurrences of the string $\alpha$ in $w$. In case $\alpha$ is a suffix of $w$, $|w(\alpha)|$ is the number of occurrences of $\alpha$ in $w$ minus one. 
Recall the definition of the unnormalized empirical entropy $H(w)$ of a string $w \in \Sigma^+$  (or tuple) 
from Section~\ref{sec-empirical}.
For an integer $k \geq 1$, the \emph{$k^{th}$-order (unnormalized) empirical entropy} of a string $w \in \Sigma^+$ is defined as
\begin{align*}
H_k(w) = \sum_{ \alpha \in \Sigma^k}H(w(\alpha)),
\end{align*}
where we set $H(\varepsilon) = 0$. For $k = 0$, $H_0(w) = H(w)$ is the  (unnormalized) empirical entropy of $w$.

A \emph{straight-line program} (SLP) for a string $w$ is a context-free grammar that produces only the string $w$. The size of an SLP is the sum of the lengths of the right-hand sides of the production rules of the context-free grammar, see e.g. \cite{Loh12survey} for details.
We prove that for each $n \geq 1$  there exists a string of length $2^{n+1}-1$, which is highly compressible with SLPs, 
but whose $k^{th}$-order empirical entropy is close to the maximum.

\begin{theorem}\label{theo-string-vergleich}
There exists a family of strings $(S_n)_n$ ($n \geq 1$) over a binary alphabet with the following properties:
\begin{itemize}
\item $|S_n| = 2^{n+1}-1$,
\item there exists an SLP of size $3n$ for $S_n$, and 
\item $H_k(S_n) \geq 2^{n+1-k}(1-o(1))$ for $k \in o(n)$.
\end{itemize}
\end{theorem}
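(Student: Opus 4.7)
Take $S_1 := 010$ and, for $i \ge 2$, $S_i := S_{i-1}\, c_i\, S_{i-1}$ where $c_i \in \{0,1\}$ alternates with the parity of $i$ (say $c_i = i \bmod 2$). A trivial induction gives $|S_n| = 2^{n+1} - 1$, and the SLP with productions $X_1 \to 010$ and $X_i \to X_{i-1}\, c_i\, X_{i-1}$ for $2 \le i \le n$ has total right-hand-side length exactly $3n$ and derives $S_n$; so the first two clauses of the theorem hold by construction.

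For the entropy bound, the strategy is to lower bound $H_k(S_n)$ by $H(w(\alpha))$ for a suitably chosen $k$-pattern $\alpha$. The key structural observation is that every $S_i$ begins and ends with $0$, so when one forms $S_i = S_{i-1}\, c_i\, S_{i-1}$ the only new $(k+1)$-grams introduced across the separator are determined by the length-$k$ prefix and suffix of $S_{i-1}$ together with $c_i$. For each $\alpha \in \{0,1\}^k$ and $b \in \{0,1\}$ this yields a linear recursion
\begin{equation*}
p^i_{\alpha b} = 2\, p^{i-1}_{\alpha b} + \gamma_{\alpha b, i},
\end{equation*}
where $p^i_{\alpha b}$ counts the occurrences of $\alpha b$ in $S_i$ and $\gamma_{\alpha b, i} \in \{0, 1, \ldots, k+1\}$ records the boundary contributions. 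Once $i - 1 \ge k$ the relevant prefix and suffix have stabilized, so $\gamma_{\alpha b, i}$ depends on $i$ only through its parity; solving the recursion gives $p^n_{\alpha b} = \Theta(2^n)$ with a positive leading coefficient whenever $\gamma_{\alpha b, i}$ is eventually nonzero.

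Applied to $k = 1$ with $\alpha = 0$, an explicit computation yields $p^n_{00} \sim p^n_{01} \sim \tfrac{2}{3}\cdot 2^n$, so $w(0)$ has length about $\tfrac{4}{3}\cdot 2^n$ with a nearly balanced split of $0$s and $1$s and $H(w(0)) = \Omega(2^n) \gg 2^{n-1}$. For general $2 \le k < n$ one picks an $\alpha$ ending in $0$ (note that $S_n$ never contains the substring $11$, so an $\alpha$ ending in $1$ would force $\alpha 1$ never to occur) whose boundary contributions are eventually positive for both choices of next bit; the alternation of the $c_i$'s guarantees such an $\alpha$ at every level $k$. The same asymptotic analysis then gives $H(w(\alpha)) = \Omega(2^n) \gg 2^{n-k}$, which suffices.

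The main technical difficulty will be the bookkeeping of the boundary corrections $\gamma_{\alpha b, i}$ uniformly in $k$ and $\alpha$, together with the verification that a pattern with two strictly positive limiting coefficients exists at every level $k$. The handful of small-$n$ cases in which the asymptotic has not yet set in are handled by direct inspection, an easy task since $2^{n-k}$ is then a small constant.
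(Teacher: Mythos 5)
Your construction differs from the paper's (the paper takes $S_1=baa$ and $S_n=bS_{n-1}S_{n-1}$, i.e.\ a \emph{fixed} separator prepended at the front rather than an alternating one inserted in the middle), and the first two bullets are fine. The problem is the third bullet, where there is a genuine gap that you yourself flag as ``the main technical difficulty'': for a given $k$ you never exhibit a concrete pattern $\alpha\in\{0,1\}^k$ for which both $\alpha 0$ and $\alpha 1$ provably occur often enough; you only assert that ``the alternation of the $c_i$'s guarantees such an $\alpha$ at every level $k$''. This is precisely where your construction is harder than the paper's: because the separator sits in the middle and alternates, the maximal run of equal symbols in your $S_n$ stays bounded (no $0^k$ for $k\ge 4$ and no $1^k$ for $k\ge 2$), so the natural candidate patterns are unavailable and one must actually identify a working $\alpha$ --- the stabilized length-$k$ suffix of $S_{n-1}$ would do, since every level-$i$ separator $c_i$ with $|S_{i-1}|\ge k$ is preceded by it and the $c_i$ alternate --- and then count its continuations. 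Even granting existence, your conclusion ``$H(w(\alpha))=\Omega(2^n)\gg 2^{n-k}$'' hides a $k$-dependent constant: the minority continuation is only guaranteed $\Omega(2^{n}/k)$ occurrences, so the comparison with the required bound $2^{n-k}$ is a further unaddressed (if easy) verification; likewise the pairs $(n,k)$ with $n-k$ small, for which the asymptotics have not set in, are not merely ``a handful of small-$n$ cases''.

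The paper sidesteps all of this by putting the constant separator $b$ at the front: then $b^k$ is a prefix of $S_n$, a short induction gives exactly $|S_n(b^k)|=2^{n-k+1}-1$ occurrences of $b^k$, of which exactly $2^{n-k}$ are followed by $a$ and $2^{n-k}-1$ by $b$, and the entropy of this almost perfectly balanced binary string is at least $0.9\,(2^{n-k+1}-1)\ge 2^{n-k}$ by a monotonicity argument. I recommend either switching to that construction or completing your argument with the explicit choice of $\alpha$ as the last $k$ symbols of $S_{n-1}$ together with the boundary counting sketched above.
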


\begin{proof}
We inductively define a string $S_n \in \{a,b\}^*$ for $n \geq 1$ as follows: We set 
\begin{itemize}
\item $S_1 = baa$ and 
\item $S_n=bS_{n-1}S_{n-1}$. 
\end{itemize}
We  have $|S_n| = 2^{n+1}-1$. The string $S_n$ corresponds to the preorder traversal of the perfect binary tree of size $2^n$, whose internal nodes are labeled with the symbol $b$ and whose leaves are labeled with the symbol $a$. 
The recursive definition of $S_n$ directly translates to an SLP for $S_n$ of size $3n$ (there is a nonterminal for each $S_i$ with $1\le i\le n$ and each rule has three symbols on the right-hand side according to the recursive definition).

It remains to show that $H_k(S_n) \geq 2^{n-k}$ for 
$0 \leq k < n$. We start with the case $k=0$. Recall that $|w|_x$ denotes the number of occurrences of a symbol $x$ in a string $w$, as defined in 
Section~\ref{sec-prelim}.
We have $|S_n|_a = 2^n$ and $|S_n|_b = 2^n-1$, which yields
\begin{eqnarray*}
H(S_n)&=&\sum_{x \in \{a,b\}} |S_n|_x \log_2\left( \frac{|S_n|}{|S_n|_x}\right) \\
&=& 2^n \log_2\left(\frac{2^{n+1}-1}{2^n}\right)
+ (2^{n}-1)\log_2\left(\frac{2^{n+1}-1}{2^n-1}\right).
\end{eqnarray*}
Define the function $g : [2,\infty) \to \mathbb{R}$ by
$$
g(x) = \frac{x}{2x-1} \log_2\left(\frac{2x-1}{x}\right) +\frac{x-1}{2x-1}\log_2\left(\frac{2x-1}{x-1}\right).
$$ 
It converges to $1$ from below for $x \to \infty$. 
Since $|S_n| = 2^{n+1}-1$ we have $H(S_n) = g(2^n) |S_n| \geq 2^{n+1}(1-o(1))$.

Let us now consider the case  $k \geq 1$ 
 and let $1 \leq m \leq n$. By construction of $S_n$, the last symbol of $S_n$ is $a$.
Therefore, the length of the string $S_n(b^m)$ equals the number of occurrences of the string $b^m$ in $S_n$. In order to lower-bound the $k^{th}$-order empirical entropy of $S_n$, we first show inductively in $n$, that 
\begin{equation} \label{eq:S_n(b^m)}
 |S_n(b^m)| = 2^{n-m+1}-1
\end{equation}
 for $1 \leq m \leq n$: For the base case, let $n = 1$. We have $S_1=baa$ and thus, $|S_1(b)|=1$. For the induction step, let $n > 1$. By definition of $S_n$, we have $S_n = bS_{n-1}S_{n-1}$. 
By the induction hypothesis, we have $|S_{n-1}(b^m)| = 2^{n-m}-1$ for $1 \leq m \leq n-1$. Moreover, $b^n$ does not occur
in $S_{n-1}$ (which follows by induction), i.e., $|S_{n-1}(b^n)| = 0 = 2^{n-n}-1$.
By construction, the last symbol of the string $S_{n-1}$ is $a$. Thus, for all $1 \leq m \leq n$ we have
$|S_{n-1}S_{n-1}(b^m)| = 2 |S_{n-1}(b^m)| = 2^{n-m+1}-2$. Hence, as the string $b^m$ with $1 \leq m \leq n$ occurs additionally as a prefix of the string $S_n=bS_{n-1}S_{n-1}$, the number of occurrences of $b^m$ in $S_n$ in total is $|S_n(b^m)| = 2^{n-m+1}-1$ for every $1 \leq m \leq n$. This proves 
\eqref{eq:S_n(b^m)}. 

Next, we count the number of occurrences of $b^m$ in $S_n$, which are followed by the symbol $a$, that is, we count $|S_n(b^m)|_a$. We show inductively in $n$, that 
$$|S_n(b^m)|_a=2^{n-m}$$
for $1 \leq m \leq n$:
For the base case, let $n=1$. As $S_1=baa$, we have $|S_1(b)|_a=1$. For the induction step, let $n > 1$. By the induction hypothesis, we have $|S_{n-1}(b^m)|_a = 2^{n-1-m}$ for $1 \leq m \leq n-1$. As $S_{n-1}$ ends with $a$, we obtain $|S_{n-1}S_{n-1}(b^m)|_a=2^{n-m}$ for $1 \leq m \leq n-1$. Moreover, the construction of $S_n$ implies that the prefix $b^n$ of $S_n$, which is the only occurrence of $b^n$ in $S_n$, is followed by the symbol $a$. Thus, $|S_n(b^m)|_a = 2^{n-m}$ for $1 \leq m \leq n$, which proves the claim. 

As $|S_n(b^m)|_a = 2^{n-m}$, we have $|S_n(b^m)|_b=2^{n-m}-1$. Thus, we obtain the following lower bound for the 
$k^{th}$-order empirical entropy of $S_n$ for $k \in o(n)$. 
\begin{eqnarray*}
H_k(S_n) & = & \sum_{\alpha \in \{a,b\}^k}H(S_n(\alpha)) \\
& \geq & \sum_{x \in \{a,b\}} |S_n(b^k)|_x \log_2 \left(\frac{|S_n(b^k)|}{|S_n(b^k)|_x}\right) \\
&= & 2^{n-k}\log_2 \left(\frac{2^{n-k+1}-1}{2^{n-k}} \right) + (2^{n-k}-1) \log_2\left(\frac{2^{n-k+1}-1}{2^{n-k}-1}\right) \\
& = & (2^{n-k+1}-1) g(2^{n-k}) \\
& \geq & 2^{n-k+1}(1-o(1))
\end{eqnarray*}
This proves the theorem. 
\end{proof}

%\bibliographystyle{plain}
%\bibliography{bib}

\begin{appendix}

\section{Histories of length smaller than $k$}\label{sec-padding}

In order to define $k^{th}$-order empirical entropy for binary trees, there are basically three possibilities how to deal with nodes whose history is shorter than $2k$: 
\begin{itemize}
\item[(i)] pad the histories with a fixed dummy symbol $\Box \in \Sigma$ and direction $i \in \{0,1\}$,
\item[(ii)] allow histories of length smaller than $2k$, or, equivalently, pad the histories with a fixed dummy symbol $\diamond \notin \Sigma$ and direction $i \in \{0,1\}$, or
\item[(iii)] ignore nodes whose history is of length smaller than $2k$.
\end{itemize}
Recall that in the main text we used the variant (i) with $i=0$.
In this subsection, we show that the above three variants are basically equivalent if $k$ is small compared to the size of the binary tree. 

Fix an integer $k \geq 1$. Recall that in Section~\ref{sec:tree-entropy} we defined for a tree $t$, a $k$-history $z\in \mathcal{L}_k$, 
and $\tilde{a} \in \Sigma \times \{0,2\}$ the numbers
$m^t_z = |V_z(t)|$ and $m^t_{z,\tilde{a}} = |\{ v \in V_z(t) \mid \lambda(v) = \tilde{a}\}|$.
The tree $t$ will be fixed in this section; hence we will write $m_z$ and $m_{z,\tilde{a}}$ in the following. 
We define several variants of these numbers.

For a $k$-history $z \in \mathcal{L}_k$ and $\tilde{a} \in \Sigma \times \{0,2\}$ we define:
\begin{eqnarray*}
m^{\scriptscriptstyle{<}} &=& |\{ v \in V(t) \mid |v| < k \}|,\\
m_{z}^{\scriptscriptstyle{<}} &=& |\{ v \in V_z(t) \mid |v| < k \}|,\\
m_{z,\tilde{a}}^{\scriptscriptstyle{<}} &=& |\{ v \in V_z(t) \mid |v| < k, \lambda(v) = \tilde{a}\}|, \\
m_{z}^{\scriptscriptstyle{\ge}} &=& |\{ v \in V_z(t) \mid |v| \geq k \}|, \\
m_{z,\tilde{a}}^{\scriptscriptstyle{\ge}} &=& |\{ v \in V_z(t) \mid |v| \geq k, \lambda(v) = \tilde{a}\}| . 
\end{eqnarray*}
We have $m^{\scriptscriptstyle{<}} \leq 2^k-1$
and $m^{\scriptscriptstyle{<}} \geq 2k-1$ if $|t| \geq k$. 
Also note that $m_z = m_{z}^{\scriptscriptstyle{<}} + m_{z}^{\scriptscriptstyle{\ge}}$
and $\sum_{z \in \mathcal{L}_k} m_{z}^{\scriptscriptstyle{<}} = m^{\scriptscriptstyle{<}}$ and $\sum_{z \in \mathcal{L}_k} m_{z}^{\scriptscriptstyle{\ge}} = 2|t|-1-m^{\scriptscriptstyle{<}}$.

Fix a fresh symbol $\diamond \notin \Sigma$ and let
$\mathcal{L}^{\diamond} = ((\Sigma\cup \{\diamond\})\{0,1\})^*$
and $\mathcal{L}_k^{\diamond} = \{w \in \mathcal{L}^{\diamond} \mid |w|=2k\}$. Clearly, $\mathcal{L} \subseteq \mathcal{L}^{\diamond}$ and $\mathcal{L}_k \subseteq \mathcal{L}_k^{\diamond}$.
Let $\ell_k: \mathcal{L}^{\diamond} \rightarrow \mathcal{L}_k^{\diamond}$ denote the partial function mapping a string $z \in \mathcal{L}^{\diamond}$ with $|z| \geq 2k$ to the suffix of $z$ of length $2k$. For a binary tree $t$ and a node $v \in V(t)$, define $h_k^{\diamond}(v) = \ell_k((\diamond 0)^k h(v))$. 
Note that $h_k^{\diamond}(v) = h_k(v)$ for nodes $v \in V(t)$ with $|v| \geq k$.
Finally, for $z \in \mathcal{L}_k^{\diamond}$ and $\tilde{a} \in \Sigma \times \{0,2\}$ we define
\begin{eqnarray*}
m_{z}^{\diamond} &=& |\{ v \in V(t) \mid h_k^{\diamond}(v) = z, |v| < k \}|,\\
m_{z,\tilde{a}}^{\diamond} &=& |\{ v \in V(t) \mid h_k^{\diamond}(v) = z, |v| < k, \lambda(v) = \tilde{a}\}| .
\end{eqnarray*}
Using the above numbers, we can define three natural variations of the $k^{th}$-order empirical entropy
of a binary node-labeled tree $t$:
\begin{itemize}
\item[(i)] Padding histories of length shorter than $2k$ with $\Box \in \Sigma$ and $i \in \{0,1\}$ yields the definition of $k^{th}$-order 
empirical entropy from Section \ref{sec-prelim} (for $i = 0$):
\begin{align*}
H_k(t) = \sum_{z \in \mathcal{L}_k}\sum_{\tilde{a} \in \Sigma \times \{0,2\}}m_{z, \tilde{a}}\log_2\left(\frac{m_{z}}{m_{z,\tilde{a}}}\right).
\end{align*}
\item[(ii)] Padding histories of length shorter than $2k$ with $\diamond \notin \Sigma$ and $i =0$ yields
\begin{align*}
H_k^{\diamond}(t) = \sum_{z \in \mathcal{L}_k^{\diamond}}\sum_{\tilde{a} \in \Sigma \times \{0,2\}}m_{z, \tilde{a}}^{\scriptscriptstyle{\ge}}
\log_2\left(\frac{m_{z}^{\scriptscriptstyle{\ge}}}{m_{z,\tilde{a}}^{\scriptscriptstyle{\ge}}}\right) + m_{z, \tilde{a}}^{\diamond} \log_2 \left(\frac{m_{z}^{\diamond}}{m_{z, \tilde{a}}^{\diamond}}\right).
\end{align*}
This is equivalent to allowing histories of length shorter than $2k$: By padding with a symbol $\diamond \notin \Sigma$, we have $h_k^{\diamond}(v_1) = h_k^{\diamond}(v_2)$ if and only if $h(v_1) = h(v_2)$ for nodes $v_1,v_2 \in V(t)$ with $|v_1|, |v_2| < k$.
\item[(iii)] Ignoring nodes whose history is of length smaller than $2k$ yields
\begin{align*}
H_k^{\scriptscriptstyle{\ge}}(t) = \sum_{z \in \mathcal{L}_k}\sum_{\tilde{a} \in \Sigma \times \{0,2\}}m_{z, \tilde{a}}^{\scriptscriptstyle{\ge}}\log_2\left(\frac{m_{z}^{\scriptscriptstyle{\ge}}}{m_{z,\tilde{a}}^{\scriptscriptstyle{\ge}}}\right).
\end{align*}
\end{itemize}
We can now show that these three approaches are basically equivalent: 
\begin{theorem}\label{theo-diff-entropies}
For every $k \geq 1$ and every binary tree $t$, we have the following:
\begin{eqnarray*}
|H_k(t)-H_k^{\diamond}(t)| & \leq &m^{\scriptscriptstyle{<}}\left(1+\frac{1}{\ln(2)}+\log_2\sigma+\log_2\left(\frac{2|t|-1}{m^{\scriptscriptstyle{<}}}\right)\right), \\
|H_k(t) - H_k^{\scriptscriptstyle{\ge}}(t)| & \leq & m^{\scriptscriptstyle{<}}\left(1+\frac{1}{\ln(2)}+\log_2\sigma+\log_2\left(\frac{2|t|-1}{m^{\scriptscriptstyle{<}}}\right)\right), \\
|H_k^{\scriptscriptstyle{\ge}}(t) - H_k^{\diamond}(t)| & \leq & m^{\scriptscriptstyle{<}}(1+\log_2\sigma) .
\end{eqnarray*}
\end{theorem}

\begin{proof}
First, note that 
\begin{align}\label{estimate-nodes}
0 \leq  \sum_{z \in \mathcal{L}_k}m_{z}^{\scriptscriptstyle{<}} \sum_{\tilde{a} \in \Sigma \times \{0,2\}}\frac{m_{z,\tilde{a}}^{\scriptscriptstyle{<}}}{m_{z}^{\scriptscriptstyle{<}}}\log_2\left(\frac{m_{z}^{\scriptscriptstyle{<}}}{m_{z,\tilde{a}}^{\scriptscriptstyle{<}}}\right) \leq m^{\scriptscriptstyle{<}}(1+\log_2\sigma),
\end{align}
as the inner sum is the Shannon entropy $H(P)$ of the probability distribution $P: \Sigma \times \{0,2\} \rightarrow [0,1]$ given by $P(\tilde{a})=m_{z,\tilde{a}}^{\scriptscriptstyle{<}}/m_z^{\scriptscriptstyle{<}}$ (and hence $H(P) \leq \log_2(2\sigma) = 1+\log_2 \sigma$)
 and as $\sum_{z \in \mathcal{L}_k}m_{z}^{\scriptscriptstyle{<}} = m^{\scriptscriptstyle{<}}$. Analogously, we get
\begin{align}\label{estimate-nodes2}
0 \leq  \sum_{z \in \mathcal{L}_k^{\diamond}}m_z^{\diamond} \sum_{\tilde{a} \in \Sigma \times \{0,2\}}\frac{m_{z, \tilde{a}}^{\diamond}}{m_z^{\diamond}}\log_2\left(\frac{m_z^{\diamond}}{m_{z, \tilde{a}}^{\diamond}}\right) \leq m^{\scriptscriptstyle{<}}(1+\log_2\sigma).
\end{align}
We start with upper-bounding $|H_k(t) - H_k^{\scriptscriptstyle{\ge}}(t)|$:
By the log-sum inequality (Lemma~\ref{logsum}) and \eqref{estimate-nodes}, we get
\begin{align*}
H_k(t) &= \sum_{z \in \mathcal{L}_k}\sum_{\tilde{a} \in \Sigma \times \{0,2\}}m_{z, \tilde{a}}\log_2\left(\frac{m_{z}}{m_{z,\tilde{a}}}\right) \\
&= \sum_{z \in \mathcal{L}_k}\sum_{\tilde{a} \in \Sigma \times \{0,2\}}(m_{z, \tilde{a}}^{\scriptscriptstyle{\ge}}+m_{z,\tilde{a}}^{\scriptscriptstyle{<}})\log_2\left(\frac{m_{z}^{\scriptscriptstyle{\ge}}+m_{z}^{\scriptscriptstyle{<}}}{m_{z,\tilde{a}}^{\scriptscriptstyle{\ge}}+m_{z,\tilde{a}}^{\scriptscriptstyle{<}}}\right) \\
&\geq \sum_{z \in \mathcal{L}_k}\sum_{\tilde{a} \in \Sigma \times \{0,2\}}m_{z, \tilde{a}}^{\scriptscriptstyle{\ge}}\log_2\left(\frac{m_{z}^{\scriptscriptstyle{\ge}}}{m_{z,\tilde{a}}^{\scriptscriptstyle{\ge}}}\right) + \sum_{z \in \mathcal{L}_k}\sum_{\tilde{a} \in \Sigma \times \{0,2\}}m_{z,\tilde{a}}^{\scriptscriptstyle{<}}\log_2\left(\frac{m_{z}^{\scriptscriptstyle{<}}}{m_{z,\tilde{a}}^{\scriptscriptstyle{<}}}\right)\\
&\geq \sum_{z \in \mathcal{L}_k}\sum_{\tilde{a} \in \Sigma \times \{0,2\}}m_{z, \tilde{a}}^{\scriptscriptstyle{\ge}}\log_2\left(\frac{m_{z}^{\scriptscriptstyle{\ge}}}{m_{z,\tilde{a}}^{\scriptscriptstyle{\ge}}}\right) \\
& = H_k^{\scriptscriptstyle{\ge}}(t).
\end{align*}
Moreover, we find
\begin{align*}
H_k(t) = & \sum_{z \in \mathcal{L}_k}\sum_{\tilde{a} \in \Sigma \times \{0,2\}}(m_{z, \tilde{a}}^{\scriptscriptstyle{\ge}}+m_{z,\tilde{a}}^{\scriptscriptstyle{<}})\log_2\left(\frac{m_{z}^{\scriptscriptstyle{\ge}}+m_{z}^{\scriptscriptstyle{<}}}{m_{z,\tilde{a}}^{\scriptscriptstyle{\ge}}+m_{z,\tilde{a}}^{\scriptscriptstyle{<}}}\right)\\
= &\sum_{z \in \mathcal{L}_k}\sum_{\tilde{a} \in \Sigma \times \{0,2\}}m_{z, \tilde{a}}^{\scriptscriptstyle{\ge}}\log_2\left(\frac{m_{z}^{\scriptscriptstyle{\ge}}+m_{z}^{\scriptscriptstyle{<}}}{m_{z}^{\scriptscriptstyle{\ge}}}\cdot \frac{m_{z}^{\scriptscriptstyle{\ge}}}{m_{z,\tilde{a}}^{\scriptscriptstyle{\ge}}+m_{z,\tilde{a}}^{\scriptscriptstyle{<}}}\right) + \\
&\sum_{z \in \mathcal{L}_k}\sum_{\tilde{a} \in \Sigma \times \{0,2\}}m_{z, \tilde{a}}^{\scriptscriptstyle{<}}\log_2\left(\frac{m_{z}^{\scriptscriptstyle{\ge}}+m_{z}^{\scriptscriptstyle{<}}}{m_{z}^{\scriptscriptstyle{<}}} \cdot \frac{m_{z}^{\scriptscriptstyle{<}}}{m_{z,\tilde{a}}^{\scriptscriptstyle{\ge}}+m_{z,\tilde{a}}^{\scriptscriptstyle{<}}}\right)\\
\leq &\sum_{z \in \mathcal{L}_k}m_{z}^{\scriptscriptstyle{\ge}}\log_2\left(\frac{m_{z}^{\scriptscriptstyle{\ge}}+m_{z}^{\scriptscriptstyle{<}}}{m_{z}^{\scriptscriptstyle{\ge}}}\right)
+\sum_{z \in \mathcal{L}_k}\sum_{\tilde{a} \in \Sigma \times \{0,2\}}m_{z, \tilde{a}}^{\scriptscriptstyle{\ge}}\log_2\left( \frac{m_{z}^{\scriptscriptstyle{\ge}}}{m_{z,\tilde{a}}^{\scriptscriptstyle{\ge}}}\right)+\\
&\sum_{z \in \mathcal{L}_k}m_{z}^{\scriptscriptstyle{<}}\log_2\left(\frac{m_{z}^{\scriptscriptstyle{\ge}}+m_{z}^{\scriptscriptstyle{<}}}{m_{z}^{\scriptscriptstyle{<}}}\right)
+\sum_{z \in \mathcal{L}_k}\sum_{\tilde{a} \in \Sigma \times \{0,2\}}m_{z, \tilde{a}}^{\scriptscriptstyle{<}}\log_2\left( \frac{m_{z}^{\scriptscriptstyle{<}}}{m_{z,\tilde{a}}^{\scriptscriptstyle{<}}}\right)\\
\leq & \; H_k^{\scriptscriptstyle{\ge}}(t)+m^{\scriptscriptstyle{<}}\left(1+\log_2\sigma\right)+\\ 
& \; \left(2|t|-1-m^{\scriptscriptstyle{<}}\right)\log_2\left(\frac{2|t|-1}{2|t|-1-m^{\scriptscriptstyle{<}}}\right)+m^{\scriptscriptstyle{<}}\log_2\left(\frac{2|t|-1}{m^{\scriptscriptstyle{<}}}\right)
\end{align*}
by the log-sum inequality (Lemma~\ref{logsum}) and our estimate from (\ref{estimate-nodes}). We have
\begin{align}\label{meanvaluetheorem}
\left(2|t|-1-m^{\scriptscriptstyle{<}}\right)\log_2\left(\frac{2|t|-1}{2|t|-1-m^{\scriptscriptstyle{<}}}\right) \leq \frac{m^{\scriptscriptstyle{<}}}{\ln(2)},
\end{align}
which follows immediately from the mean-value theorem: as a consequence of the mean-value theorem, for every mapping $f: [a,b] \rightarrow \mathbb{R}$, which is differentiable on $[a,b]$, we have 
\begin{align*}
|f(b) - f(a)| \leq \max_{x \in [a,b]} |f'(x)| \cdot |b-a|.
\end{align*}
With $f(x) = \log_2(x)$, $a =2|t|-1-m^{\scriptscriptstyle{<}}$ and $b = 2|t|-1$ and by logarithmic identities, we obtain the estimate \eqref{meanvaluetheorem}. 
Thus, we have:
$$|H_k(t) - H_k^{\scriptscriptstyle{\ge}}(t)| \leq  m^{\scriptscriptstyle{<}}\left(1+\log_2\sigma+\frac{1}{\ln(2)} + \log_2\left(\frac{2|t|-1}{m^{\scriptscriptstyle{<}}}\right)\right).$$ 
Next, we upper-bound $|H_k^{\scriptscriptstyle{\ge}}(t) - H_k^{\diamond}(t)|$: From the definitions of $H_k^{\scriptscriptstyle{\ge}}(t)$ and  $H_k^{\diamond}(t)$, we get
\begin{align*}
H_k^{\diamond}(t) = H_k^{\scriptscriptstyle{\ge}}(t) + \sum_{z \in \mathcal{L}_k^{\diamond}} \sum_{\tilde{a} \in \Sigma \times \{0,2\}}m_{z, \tilde{a}}^{\diamond}\log_2\left(\frac{m_z^{\diamond}}{m_{z, \tilde{a}}^{\diamond}}\right).
\end{align*} 
As the second sum on the right-hand side is between $0$ and $m^{\scriptscriptstyle{<}}(1+\log_2\sigma)$ (see (\ref{estimate-nodes2})), 
we get $|H_k^{\scriptscriptstyle{\ge}}(t) - H_k^{\diamond}(t)|  \leq  m^{\scriptscriptstyle{<}}(1+\log_2\sigma)$.

Finally, as $H_k^{\diamond}(t)\geq H_k^{\scriptscriptstyle{\ge}}(t)$ and $H_k(t)\geq H_k^{\scriptscriptstyle{\ge}}(t)$, we have
\begin{align*}
|H_k(t)-H_k^{\diamond}(t)|&\leq
 m^{\scriptscriptstyle{<}}\left(1+\log_2\sigma+\frac{1}{\ln(2)}+\log_2\left(\frac{2|t|-1}{m^{\scriptscriptstyle{<}}}\right)\right).
\end{align*}
This proves the theorem.
\end{proof}
Theorem~\ref{theo-diff-entropies} moreover shows that the choice of the symbol $\Box \in \Sigma$ used for padding the histories only affects the value of the $k^{th}$-order empirical entropy by an additive term of at most $m^{\scriptscriptstyle{<}}(1+\log_2\sigma+1/\ln(2))+m^{\scriptscriptstyle{<}}\log_2((2|t|-1)/m^{\scriptscriptstyle{<}})$.

\section{Proof of Lemma~\ref{lemma-H_2k}} \label{appendix-H_2k}

Fix a binary tree $t \in \mathcal{T}(\Sigma)$.
By definition of the first-child next-sibling encoding, every inner node of $\ffcns(t)$ corresponds in a bijective manner to a node of $t$: For an inner node $v$ of $\ffcns(t)$, let $\ffcns^{-1}(v)$ denote the corresponding node of $t$ and let $\ffcns(v)$ denote the corresponding inner node of $\ffcns(t)$ of a node $v$ of $t$. If $v$ is a node of $t$, then we obtain $h(\ffcns(v))$ as follows: If $v = \varepsilon$, then $h(\ffcns(v)) = \varepsilon$. Moreover, if $v$ is a left child of a node $\parent(v)$ with label $a \in \Sigma$, then $h(\ffcns(v))=h(\ffcns(\parent(v)))a0$ (and $h(v)=h(\parent(v))a0$). Finally, if $v$ is a right child of a node $\parent(v)$ with label $a \in \Sigma$ and $v$'s left sibling has label $a' \in \Sigma$, then $h(\ffcns(v))=h(\ffcns(\parent(v)))a0a'1$ (and $h(v)=h(\parent(v))a1$). Thus, we are also able to determine $h(\ffcns^{-1}(v))$ from $h(v)$ for every inner node $v$ of $\ffcns(t)$: locating every occurrence of a pattern of the form $0a1$ with $a \in \Sigma$ in the string $h(v)$ and replacing it by $1$ yields $h(\ffcns^{-1}(v))$. 

In particular,  we have $|h(\ffcns(v))|\leq 2|h(v)|$ for every node $v$ of $t$, respectively, $|h(\ffcns^{-1}(v))|\geq 1/2|h(v)|$ for every inner node $v$ of $\ffcns(t)$. Moreover, for every inner node $v$ of $\ffcns(t)$, we can uniquely determine $h_k(\ffcns^{-1}(v))$ from $h_{2k}(v)$. Thus, we are also able to determine $h_{k-1}(\ffcns^{-1}(v))$ from $h_{2k-1}(v)$ for every inner node $v$ of $\ffcns(t)$.
Let 
\begin{align*}
\mathcal{L}_{m}(\ffcns(t)) = \{h_{m}(v) \mid v \text{ is an inner node of } \ffcns(t)\}
\end{align*}
denote the set of $m$-histories that appear as $m$-history of an inner node of $\ffcns(t)$. 
We define a mapping $\varphi: \mathcal{L}_{2k}(\ffcns(t)) \rightarrow\mathcal{L}_k$ by $\varphi(h_{2k}(v))=h_k(\ffcns^{-1}(v))$, which maps the $2k$-history of an inner node of $\ffcns(t)$ to the $k$-history of the corresponding node in $t$: By the above considerations, this mapping is well-defined. 
Furthermore, we define a mapping $\pi: \mathcal{L}_{2k-1}(\ffcns(t)) \rightarrow \mathcal{\mathcal{L}}_{k-1}$ by $\pi(h_{2k-1}(v))=h_{k-1}(\ffcns^{-1}(v))$. Again, by the above considerations, this mapping is well-defined, as we are able to determine $h_{k-1}(\ffcns^{-1}(v))$ from $h_{2k-1}(v)$. 

For $m \geq 2$ we partition $\mathcal{L}_{m}$ into the following disjoint subsets:
\begin{align*}
 \mathcal{L}_{m}^0&=\{a_1i_1\cdots a_{m}i_{m}  \in  \mathcal{L}_{m} \mid i_{m}=0\},\\
  \mathcal{L}_{m}^{01}&=\{a_1i_1\cdots a_{m}i_{m}  \in  \mathcal{L}_{m} \mid i_{m-1}=0 \text{ and } i_{m}=1\},\\
  \mathcal{L}_{m}^{11}&=\{a_1i_1\cdots a_{m}i_{m}  \in  \mathcal{L}_{m} \mid i_{m-1}=1 \text{ and } i_{m}=1\}.
\end{align*}
Moreover, we define $\mathcal{L}_{2k}^s(\ffcns(t))=\mathcal{L}_{2k}^s \cap \mathcal{L}_{2k}(\ffcns(t))$ for $s \in \{0,01,11\}$.
We observe the following:
\begin{itemize}
\item[(i)] If $h_{2k}(v) \in  \mathcal{L}_{2k}^{11}$ for a node $v$ of $\ffcns(t)$, then $v$ is a $\Box$-labeled leaf of $\ffcns(t)$: As $t$ is a binary tree, the right sibling of a node has no right
sibling. Thus, there are no inner nodes $v$ in $\ffcns(t)$ with $h_{2k}(v) \in  \mathcal{L}_{2k}^{11}$.
\item[(ii)] If $h_{2k}(v) \in  \mathcal{L}_{2k}^{01}$ for a node $v$ of $\ffcns(t)$, then $v$ is an inner node of $\ffcns(t)$: This follows again 
from the fact that $t$ is a binary tree (and hence does not have unary nodes). 
\item[(iii)] If $h_{2k}(v) \in  \mathcal{L}_{2k}^{0}$ for a node $v$ of $\ffcns(t)$, then $v$ can be an inner node or a leaf of $\ffcns(t)$. 
If $v$ is a leaf, then its label is the fixed dummy symbol $\Box \in \Sigma$. 
\item[(iv)] For every $i \in \{0,1\}$ and node $v$ of $t$,
we have $h_k(v) \in \mathcal{L}_k^i$ if and only if $h_{2k}(\ffcns(v)) \in  \mathcal{L}_{2k}^i(\ffcns(t))$. 
In particular $\varphi(z) \in \mathcal{L}_k^0$ for every $z \in \mathcal{L}_{2k}^0(\ffcns(t))$ and $\varphi(z) \in \mathcal{L}_k^{1}$ for every $z \in \mathcal{L}_{2k}^{01}(\ffcns(t))$. Hence 
$\varphi(z) \neq \varphi(z')$ if $z \in \mathcal{L}_{2k}^{01}(\ffcns(t))$ and $z' \in \mathcal{L}_{2k}^{0}(\ffcns(t))$. 
\end{itemize}
From (i), we obtain
\begin{align}\label{eq-L2k11}
\sum_{z \in \mathcal{L}_{2k}^{11}} \sum_{\tilde{a} \in \Sigma \times \{0,2\}} m_{z,\tilde{a}}^{\ffcns(t)}\log_2\left(\frac{m_{z}^{\ffcns(t)}}{m_{z,\tilde{a}}^{\ffcns(t)}}\right)=0.
\end{align}
From (ii) and (iv), we obtain the following:
\begin{eqnarray*}
 & & \sum_{z \in \mathcal{L}_{2k}^{01}} \sum_{\tilde{a} \in \Sigma \times \{0,2\}} m_{z,\tilde{a}}^{\ffcns(t)}\log_2\left(\frac{m_{z}^{\ffcns(t)}}{m_{z,\tilde{a}}^{\ffcns(t)}}\right) \\
& = & \sum_{z \in \mathcal{L}_{2k}^{01}(\ffcns(t))} \sum_{a \in \Sigma} m_{z,(a,2)}^{\ffcns(t)}\log_2\left(\frac{m_{z}^{\ffcns(t)}}{m_{z,(a,2)}^{\ffcns(t)}}\right)\\
& \leq & \sum_{y \in \mathcal{L}_k^1}  \sum_{a \in \Sigma} \left(\sum_{z \in \varphi^{-1}(y)}m_{z,(a,2)}^{\ffcns(t)}\right)\log_2\left(\frac{\sum_{z \in \varphi^{-1}(y)}m_{z}^{\ffcns(t)}}{\sum_{z \in \varphi^{-1}(y)}m_{z,(a,2)}^{\ffcns(t)}}\right),
\end{eqnarray*}
where the last estimate follows from the log-sum inequality (Lemma~\ref{logsum}). For every $y \in \mathcal{L}_k^1$ we have
\begin{align*}
&\sum_{z \in \varphi^{-1}(y)}m_{z,(a,2)}^{\ffcns(t)}=m_{y,(a,0)}^t + m_{y,(a,2)}^t, \\
&\sum_{z \in \varphi^{-1}(y)}m_{z}^{\ffcns(t)}=m_{y}^t.
\end{align*}
Thus, we obtain
\begin{eqnarray}\label{eq-Lk01}
& & \sum_{z \in \mathcal{L}_{2k}^{01}} \sum_{\tilde{a} \in \Sigma \times \{0,2\}} m_{z,\tilde{a}}^{\ffcns(t)}\log_2\left(\frac{m_{z}^{\ffcns(t)}}{m_{z,\tilde{a}}^{\ffcns(t)}}\right) \notag \\
& \leq & \sum_{y \in \mathcal{L}_k^1}\sum_{a \in \Sigma}\left(m_{y,(a,0)}^t + m_{y,(a,2)}^t\right)\log_2\left(\frac{m_{y}^t}{m_{y,(a,0)}^t + m_{y,(a,2)}^t}\right).
\end{eqnarray}
From (iii) and (iv), we obtain 
\begin{eqnarray*}
& & \sum_{z \in \mathcal{L}_{2k}^{0}} \sum_{\tilde{a} \in \Sigma \times \{0,2\}} m_{z,\tilde{a}}^{\ffcns(t)}\log_2\left(\frac{m_{z}^{\ffcns(t)}}{m_{z,\tilde{a}}^{\ffcns(t)}}\right)\\
&=& {\sum_{z \in \mathcal{L}_{2k}^{0}(\ffcns(t))} \sum_{a \in \Sigma} m_{z,(a,2)}^{\ffcns(t)}\log_2\left(\frac{m_{z}^{\ffcns(t)}}{m_{z,(a,2)}^{\ffcns(t)}}\right)}+\sum_{z \in \mathcal{L}_{2k}^{0}}m_{z,(\Box,0)}^{\ffcns(t)}\log_2\left(\frac{m_{z}^{\ffcns(t)}}{m_{z,(\Box,0)}^{\ffcns(t)}}\right).
\end{eqnarray*}
For the first summand, we find analogously as in the previous estimate (\ref{eq-Lk01}):
\begin{eqnarray}\label{eq-L2k0a}
& & {\sum_{z \in \mathcal{L}_{2k}^{0}(\ffcns(t))} \sum_{a \in \Sigma} m_{z,(a,2)}^{\ffcns(t)}\log_2\left(\frac{m_{z}^{\ffcns(t)}}{m_{z,(a,2)}^{\ffcns(t)}}\right)} \notag \\
& \leq & \sum_{y \in \mathcal{L}_k^0}\sum_{a \in \Sigma}\left(m_{y,(a,0)}^t + m_{y,(a,2)}^t\right)\log_2\left(\frac{m_{y}^t}{m_{y,(a,0)}^t + m_{y,(a,2)}^t}\right).
\end{eqnarray}
For the second summand, we obtain as $k \geq 1$:
\begin{align*}
&\sum_{z \in \mathcal{L}_{2k}^{0}}m_{z,(\Box,0)}^{\ffcns(t)}\log_2\left(\frac{m_{z}^{\ffcns(t)}}{m_{z,(\Box,0)}^{\ffcns(t)}}\right) \\
 =& \sum_{z \in \mathcal{L}_{2k-1}}\sum_{a \in \Sigma}m_{za0,(\Box,0)}^{\ffcns(t)}\log_2\left(\frac{m_{za0}^{\ffcns(t)}}{m_{za0,(\Box,0)}^{\ffcns(t)}}\right)\\
\leq & \sum_{y \in \mathcal{L}_{k-1}} \sum_{a \in \Sigma}\left(\sum_{z \in \pi^{-1}(y)}m_{za0,(\Box,0)}^{\ffcns(t)}\right)\log_2\left(\frac{\sum_{z \in \pi^{-1}(y)}m_{za0}^{\ffcns(t)}}{\sum_{z \in \pi^{-1}(y)}m_{za0,(\Box,0)}^{\ffcns(t)}}\right),
\end{align*}
where the last inequality follows from the log-sum inequality. Moreover, for all $y \in \mathcal{L}_{k-1}$ we have
\begin{align*}
&\sum_{z \in \pi^{-1}(y)}m_{za0,(\Box,0)}^{\ffcns(t)}=m_{y,(a,0)}^t,\\
&\sum_{z \in \pi^{-1}(y)}m_{za0}^{\ffcns(t)}=m_{y,(a,0)}^t+m_{y,(a,2)}^t.
\end{align*}
Thus, we find
\begin{eqnarray}\label{eq-L2k0b}
& & \sum_{z \in \mathcal{L}_{2k}^{0}}m_{z,(\Box,0)}^{\ffcns(t)}\log_2\left(\frac{m_{z}^{\ffcns(t)}}{m_{z,(\Box,0)}^{\ffcns(t)}}\right) \notag \\
& \leq & \sum_{y \in \mathcal{L}_{k-1}}\sum_{a \in \Sigma}m_{y,(a,0)}^t\log_2\left(\frac{m_{y,(a,0)}^t+m_{y,(a,2)}^t}{m_{y,(a,0)}^t}\right). 
\end{eqnarray}
Altogether, if we combine the estimates from (\ref{eq-L2k11}), (\ref{eq-Lk01}), (\ref{eq-L2k0a}) and (\ref{eq-L2k0b}), we obtain:
\begin{align*}
&H_{2k}(\ffcns(t)) = \sum_{z \in \mathcal{L}_{2k}}\sum_{\tilde{a} \in \Sigma \times \{0,2\}}m_{z, \tilde{a}}^{\ffcns(t)}\log_2\left(\frac{m_z^{\ffcns(t)}}{m_{z, \tilde{a}}^{\ffcns(t)}}\right)\\
&= \sum_{z \in \mathcal{L}_{2k}^0}\sum_{\tilde{a} \in \Sigma \times \{0,2\}}m_{z, \tilde{a}}^{\ffcns(t)}\log_2\left(\frac{m_z^{\ffcns(t)}}{m_{z, \tilde{a}}^{\ffcns(t)}}\right)+\sum_{z \in \mathcal{L}_{2k}^{01}}\sum_{\tilde{a} \in \Sigma \times \{0,2\}}m_{z, \tilde{a}}^{\ffcns(t)}\log_2\left(\frac{m_z^{\ffcns(t)}}{m_{z, \tilde{a}}^{\ffcns(t)}}\right)\\&+
\sum_{z \in \mathcal{L}_{2k}^{11}}\sum_{\tilde{a} \in \Sigma \times \{0,2\}}m_{z, \tilde{a}}^{\ffcns(t)}\log_2\left(\frac{m_z^{\ffcns(t)}}{m_{z, \tilde{a}}^{\ffcns(t)}}\right)\\
&\leq \sum_{z \in \mathcal{L}_k^1}\sum_{a \in \Sigma}\left(m_{z,(a,0)}^t + m_{z,(a,2)}^t\right)\log_2\left(\frac{m_{z}^t}{m_{z,(a,0)}^t + m_{z,(a,2)}^t}\right)\\
&+\sum_{z \in \mathcal{L}_k^0}\sum_{a \in \Sigma}\left(m_{z,(a,0)}^t + m_{z,(a,2)}^t\right)\log_2\left(\frac{m_{z}^t}{m_{z,(a,0)}^t + m_{z,(a,2)}^t}\right)\\
&+\sum_{z \in \mathcal{L}_{k-1}}\sum_{a \in \Sigma}m_{z,(a,0)}^t\log_2\left(\frac{m_{z,(a,0)}^t+m_{z,(a,2)}^t}{m_{z,(a,0)}^t}\right)\\
&\leq \sum_{z \in \mathcal{L}_{k-1}}\sum_{a \in \Sigma}\left(m_{z,(a,0)}^t + m_{z,(a,2)}^t\right)\log_2\left(\frac{m_{z}^t}{m_{z,(a,0)}^t + m_{z,(a,2)}^t}\right)\\
&+\sum_{z \in \mathcal{L}_{k-1}}\sum_{a \in \Sigma}m_{z,(a,0)}^t\log_2\left(\frac{m_{z,(a,0)}^t+m_{z,(a,2)}^t}{m_{z,(a,0)}^t}\right)\\
&\leq \sum_{z \in \mathcal{L}_{k-1}}\sum_{\tilde{a} \in \Sigma\times\{0,2\}}m_{z, \tilde{a}}^t\log_2\left(\frac{m_z^t}{m_{z, \tilde{a}}^t}\right)=H_{k-1}(t),
\end{align*}
where the last-but-one estimate follows again from the log-sum inequality.
This proves Lemma~\ref{lemma-H_2k}.
\qed

\end{appendix}

\end{document}